\newcommand{\beq}{\begin{equation}}
\newcommand{\eeq}{\end{equation}}
\newcommand{\bea}{\begin{eqnarray}}
\newcommand{\eea}{\end{eqnarray}}
\newcommand{\la}{\langle}
\newcommand{\ra}{\rangle}
\mathchardef\nss="711B
\def\nss{\mathcal{S}}
\def\be{\begin{eqnarray}}
\def\ee{\end{eqnarray}}
\newlength{\myL}
\theoremstyle{plain}
  \newtheorem{theorem}{Theorem}
  \newtheorem{lemma}[subsubsection]{Lemma}
  \newtheorem{corollary}{Corollary}
\theoremstyle{remark}
  \newtheorem{example}[subsubsection]{Example}
\title{\boldmath Decomposition of $\mathcal{N}=1$ superconformal minimal models and their fractional quantum Hall wavefunctions}
\author[a]{Yichen Hu,}
\author[b]{Sirui Ning,}
\author[c]{Yehao Zhou}
\affiliation[a]{Department of Physics, Princeton University,\\Princeton, New Jersey 08544, USA}
\affiliation[b]{The Rudolf Peierls Centre for Theoretical Physics, University of Oxford,\\Oxford OX1 3PU, United Kingdom}
\affiliation[c]{Kavli Institute for the Physics and Mathematics of the Universe (WPI), the University of Tokyo,\\Kashiwa, Chiba 277-8583, Japan}
\emailAdd{yh1553@princeton.edu}
\emailAdd{sirui.ning@physics.ox.ac.uk}
\emailAdd{yehao.zhou@ipmu.jp}
\abstract{$\mathcal{N}=1$ superconformal minimal models are the first series of unitary conformal field theories (CFTs) extending beyond Virasoro algebra. Using coset constructions, we characterize CFTs in $\mathcal{N}=1$ superconformal minimal models using combinations of a parafermion theory, an Ising theory and a free boson theory. Supercurrent operators in the original theory also becomes sums of operators from each constituent theory. If we take our $\mathcal{N}=1$ superconformal theories as the neutral part of the edge theory of a fractional quantum Hall state, we present a systematic way of calculating its ground state wavefunction using free field methods. Each ground state wavefunction is known previously as a sum of polynomials with distinct clustering behaviours. Based on our decomposition, we find explicit expressions for each summand polynomial. A brief generalization to $S_3$ minimal models using coset construction is also included. }
\begin{document}

\maketitle
\section{Introduction}
 Unitary conformal field theories \cite{Belavin:1984vu} have played important roles in many different fields of physics ranging from string theory \cite{Friedan1986,DiFrancesco:1997nk} to critical statistical mechanical systems \cite{Cardy,DiFrancesco:1997nk}. Recent studies have also shown their intimate connections to topological phases of matter from symmetry protected topological phases \cite{Shinsei1,Cho2017,Teo1,Cenke1} to fractional quantum Hall states \cite{Halperin2020,Moore:1991ks,Read_2009,Simon_2007,Simon_2009,Simon_2010, Read:1998ed,Bernevig_2008}. Specifically, for quantum Hall states, following many pioneering works, wavefunctions of fractional quantum Hall states are shown to be closely related to correlators of primariy fields of the unitary conformal field theory describing its low-energy edge physics. This is a manifestisation of the general ``bulk-boundary'' correspondence of a topologically ordered state where the bulk hosts a topological field theory and the boundary hosts a conformal field theory. Wavefunctions from unitary conformal field theories associated with fractional quantum Hall states provide deep insights into understandings of  physical properties of these exotic phases of matter, such as anyonic excitaions, fractional statistics and even non-Abelian fusion structures among anyons.

$\mathcal{N}=1$ superconformal field theories (SCFTs) are the first series of unitary field theories non-trivially extending Virasoro algebra \cite{Goddard1986,Qiu1986,Friedan1985}. Each member SCFT has a $\Delta=\frac{3}{2}$ supercurrent operator $G$ (also called $\mathbb{Z}^{(r)}_2$ parafermion in \cite{Estienne_2010} because of its fusion structure: $G\times G=\mathbb{I}$) that together with the usual $\Delta=2$ stress-energy tensor $T$  forming the $\mathcal{N}=1$ superconformal algebra. This algebra coincides with $WB_1(\beta)$ algebra \cite{Lukyanov:1990tf}, where $\beta$ is related to the central charge by $c=15/2-3\beta-3\beta^{-1}$. Note that $WB_1(\beta)$ minimal models correspond to $\beta=p/q$, where $p,q$ are coprime integers. Other generalizations beyond Virasoro algebra involving operators with fractional spin $\Delta=\frac{N+1}{N}$ can be found in \cite{Chung:1991kb,Argyres_1993,Argyres_1994} ($N=3$ corresponds to $S_3$ minimal models \cite{Fateev}). Our motivations to study this particular series of CFTs are two fold. Recent works following \cite{Harvey2018,Harvey2020,Bae2021,Duan2022} have revealed an intricate structure in the theory space of rational CFTs. Different rational CFTs with highly similar topological modular data, fusion structures and anyon contents, are related by number theoretical maps at the level of characters. One can trace these different classes of topological modular data back to realizations in either some parafermion or minimal model $M(p,q)$ theories (usually non-unitary $q\neq p+1$). Along this line of thinking, we want to understand how our $\mathcal{N}=1$ minimal models can be decomposed into more ``primitive'' CFTs. Furthermore, we go beyond characters and try to find explicit decompositions of primary fields with Abelian fusion structure. On the other hand, certain anyons are essential for topological quantum computation \cite{Freedman2002}. Member CFTs in $\mathcal{N}=1$ series are known to host these anyons that are capable of performing universal quantum computation by braiding alone. For example, the tri-critical Ising model \cite{DiFrancesco:1997nk}, also known as the edge theory of the ``anti-Fibonacci'' phase \cite{Kane1}, hosts Fibonacci anyons with scaling dimension $\Delta=\frac{3}{5}$ and fusion rule $\tau \times \tau=\mathbb{I}+\tau$. A comprehensive understanding of the ground state and quasiparticle wavefunctions of these topological phases are crucial for their quantum information applications.

This paper is organized as follows: in section ~\ref{2}, we present explicit characterization of the $\mathcal{N}=1$ superconformal minimal models using coset construction. This also gives us an unique decomposition of the supercurrent operators. In section~\ref{3}, we take our $\mathcal{N}=1$ superconformal field theories as the neutral part of a fractional quantum Hall state. We work out its ground state wavefunction based on correlators of the supercurrent operators. The main tool is to use the fact that correlators of supercurrent operators depends on the central charge in a polynomial way, so correlators of supercurrent operators in various free field theories with different central charges uniquely determine the correlators themselves. By fixing the limit theory of our minimal models at $c=\frac{3}{2}$,  we can further organize our wavefunction polynomial into sums of symmetric polynomials with distinct clustering behaviours following \cite{Estienne_2010}. Our approach gives explicit expression for each summand symmetric polynomial in a concise format.



\section{$\mathcal{N}=1$ superconformal minimal models and its coset constructions}
\label{2}
$\mathcal{N}=1$ superconformal minimal models are a series of conformal field theories with $\mathcal{N}=1$ super-Virasoro algebra generated by 
\begin{equation}
\begin{aligned}
T(z)T(z^\prime)=&\frac{c/2}{\left(z-z^\prime\right)^4}+\frac{2T(z^\prime)}{\left(z-z^\prime\right)^2}+\frac{\partial T(z^\prime)}{z-z^\prime}+\mathcal{O}(1)\\
T(z)G(z^\prime)=&\frac{(3/2)G(z^\prime)}{\left(z-z^\prime\right)^2}+\frac{\partial G(z^\prime)}{z-z^\prime}+\mathcal{O}(1)\\
G(z)G(z^\prime)=&\frac{(2c/3)}{\left(z-z^\prime\right)^3}+\frac{2T(z^\prime)}{z-z^\prime}+\mathcal{O}(1)
\end{aligned}
\end{equation} 
where $z=x+i\tau$ denoting the $(1+1)$d Euclidean space-time coordinates compactified on a infinite cylinder and $G$ is the scaling dimension $\Delta=\frac{3}{2}$ supercurrent operator that generates the $\mathcal{N}=1$ supersymmetry. They have central charge $c=\frac{3}{2}(1-\frac{8}{m(m+2)})$ with $m=3,4,\cdots$ where $m=3$ is the tri-critical Ising CFT. Each member SCFT has a known coset construction \cite{Goddard1986,DiFrancesco:1997nk}
\beq
\frac{SU(2)_2\times SU(2)_{m-2}}{SU(2)_m}
\eeq 
\subsection{Decomposition}
In order to decompose our $\mathcal{N}=1$ theories, let us first large our SCFTs by pairing them with a $\mathbb{Z}_m$ parafermion CFT $\frac{SU(2)_m}{U(1)_{2m}}$. Reshuffling factors in cosets of this product theory, we get the following relations:
\begin{equation}
\frac{SU(2)_m}{U(1)_{2m}}\times \frac{SU(2)_2\times SU(2)_{m-2}}{SU(2)_m}=\frac{SU(2)_{m-2}}{U(1)_{2(m-2)}}\times  \frac{SU(2)_2}{U(1)_4}\times U(1)_{4m(m-2)}
\label{N=1}
\end{equation} 
This equality holds at the level of stress-energy tensor. The $U(1)$ factors in Eq. (\ref{N=1}) needs a little explanation. Let us reshuffle $U(1)$ theories as following:
\beq
\text{Left}\quad (U(1)_{2m})^{-1}\times U(1)_{2(m-2)}\times U(1)_{4}, \quad\text{Right} \quad U(1)_{4m(m-2)}.
\label{U(1)}
\eeq To establish an equivalence relation between the left and right hand side, we first write out their Lagrangian densities:
\beq
\mathcal{L}_{l}=\frac{1}{4\pi} \sum_{I,J=1}^{3}K_{l}^{IJ} \partial_x \phi_{I} (\partial_x+\partial_t) \phi_{J}, \quad \mathcal{L}_{r}=\frac{1}{4\pi} K_{r} \partial_x \phi_1 (\partial_x+\partial_t) \phi_1
\eeq where $K_l=\begin{pmatrix}
    -2m & 0 &0\\
    0 & 2(m-2) & 0\\
    0 & 0 & 4
\end{pmatrix}$ and $K_r=4m(m-2)$. It is easy to see that there exists a matrix $M$ with integer entries
\beq
M=\begin{pmatrix}
2-m&-m&0\\
1&1&1\\
-1&-1&1\\
\end{pmatrix}, \quad \det{M}=4>0
\eeq  such that
\beq
M^T
(\begin{pmatrix}
-2m & 0 & 0\\
0 & 2(m-2) & 0\\
0 & 0 & 4\\
\end{pmatrix})
M=\begin{pmatrix}
4m(m-2)\\
\end{pmatrix}\bigoplus 8\sigma_z.
\label{K-matrix}
\eeq 
The right hand side of Eq.~(\ref{K-matrix}) is equivalent to $\mathcal{L}_r$ as we can add a backscattering term to the two counter-propagating modes
\beq
2\cos{(\sqrt{2}(\phi_2-\phi_3))}
\eeq to gap out $8\sigma_z$ degrees of freedom ($\phi_2$ and $\phi_3$)\cite{leonid,Kane1992}. We have established an equivalence relation in Eq. (\ref{U(1)}).



Eq. (\ref{N=1}) means we can combine a $\mathbb{Z}_m$ parafermion theory with the $m$th member of the $\mathcal{N}=1$ SCFT theory and turn them into a linear combination of a $\mathbb{Z}_{m-2}$ parafermion theory, a $U(1)$ free boson theory and an Ising ($\mathbb{Z}_2$ parafermion) theory. Focusing on stress-energy tensor, we have
\beq
\begin{aligned}
T_{min}&=\gamma T_{\mathbb{Z}_{m-2}}+\alpha T_{\mathbb{Z}_2}+\beta T_{U(1)}+ \delta t\\
T_{\mathbb{Z}_m}&=(1-\gamma) T_{\mathbb{Z}_{m-2}}+(1-\alpha) T_{\mathbb{Z}_2}+(1-\beta) T_{U(1)}- \delta t
\end{aligned}
\eeq where $\gamma$, $\alpha$, $\beta$ and $\delta$ are coefficients representing weights for each constituent stress-energy tensor. $t$ is an extra $\Delta=2$ auxiliary field made out of Abelian primary fields from each constituent CFT ($\gamma$ from Ising theory, $\varphi_1$ ($\bar{\varphi_1}$) from parafermion theory and $e^{i\sqrt{\frac{m}{m-2}}\phi}$ from $U(1)$ theory), that is key to make this decomposition work.
A useful way to understand the appearance of the auxiliary spin-$2$ field $t$ is the following.
In the product theory $\mathbb{Z}_{m-2}\times \mathbb{Z}_2 \times U(1)$, we would like to find two
mutually commuting stress tensors $T_{\min}$ and $T_{\mathbb{Z}_m}$ whose sum equals the total
stress tensor $T_{\mathbb{Z}_{m-2}}+T_{\mathbb{Z}_2}+T_{U(1)}$.
If one restricts to linear combinations of the three constituent stress tensors only, the OPE closure
conditions for $T_{\min}$ and $T_{\mathbb{Z}_m}$ are generically overconstrained and do not admit
a solution compatible with mutual commutativity.
The role of $t$ is precisely to provide an additional independent direction in the neutral,
local spin-$2$ sector so that the decomposition problem becomes solvable.

Our choice of $t$ is further constrained by locality and neutrality:
we take $t$ to be a local dimension-$2$ operator built from Abelian (simple-current) primaries of
each constituent CFT, so that it braids trivially with the fields that appear in the decomposition.
With this requirement, and demanding that the $t\times t$ OPE closes onto the stress tensors as in
Eq.~(2.10), the operator $t$ is fixed up to an overall normalization and sign (which can be absorbed
into $\delta$). Other dimension-$2$ operators are either Virasoro descendants/improvement terms that
can be reabsorbed into a redefinition of the constituent stress tensors, or they are non-local with
respect to the desired operator algebra and are therefore excluded from the stress-tensor decomposition.
$t$ satisfies
\beq
t(z)t(w)\sim\frac{1}{\left(z-w\right)^4}+\frac{1}{\left(z-w\right)^2}
\left(2T_{\mathbb{Z}_2}+\frac{m}{m-2}T_{U(1)}+\frac{m}{m-2}T_{\mathbb{Z}_{m-2}}\right)
\eeq since consistency of operator product expansion between stress-energy tensor $T$'s and $t$ requires coefficient in front of $T_{\mathbb{Z}_2}$ is $\frac{2\Delta(\gamma)}{c_{\mathbb{Z}_2}}=2$, $T_{U(1)}$ is $\frac{2\Delta(e^{i\sqrt{\frac{m}{m-2}}\phi})}{c_{U(1)}}=\frac{m}{m-2}$ and $T_{\mathbb{Z}_{m-2}}$ to be $\frac{2\Delta(\varphi_1 (\bar{\varphi_1}))}{c_{\mathbb{Z}_{m-2}}}=\frac{m}{m-2}$.

Using this fact together with the usual stress-energy tensor OPE: 
\begin{equation}
   T_{\textrm{min}/\mathbb{Z}_m}(z)T_{\textrm{min}/\mathbb{Z}_m}(w)\sim \frac{c/2}{\left(z-w\right)^4}+\frac{2T_{\textrm{min}/\mathbb{Z}_m }(w)}{\left(z-w\right)^2}, 
\end{equation}
we arrive at the following set of equations for these coefficients:
\beq
\begin{aligned}
&\frac{1}{2}\alpha+\frac{m}{2(m-2)}\beta+\frac{m-3}{m-2}\gamma=1 \textrm{ (for $t$ coefficient)}\\
&\alpha^2+\delta^2=\alpha\textrm{ (for $\mathbb{Z}_2$ coefficient)}\\
&\beta^2+\frac{m}{2(m-2)}\delta^2=\beta\textrm{ (for $U(1)$ coefficient)}\\
&\gamma^2+\frac{m}{2(m-2)}\delta^2=\gamma\textrm{ (for $\mathbb{Z}_{m-2}$ coefficient)}
\end{aligned}
\eeq They have two solutions which corresponds to the two stress-energy tensors $T_{min}$ and $T_{\mathbb{Z}_m}$:
\beq
\begin{split}
&T_{min}= \frac{m-2}{m+2}T_{\mathbb{Z}_2}+\frac{m}{m+2}T_{U(1)}+\frac{2}{m+2}T_{\mathbb{Z}_{m-2}}\pm\frac{2\sqrt{m-2}}{m+2}t \\
&T_{\mathbb{Z}_m}=\frac{4}{m+2}T_{\mathbb{Z}_2}+\frac{2}{m+2}T_{U(1)}+\frac{m}{m+2}T_{\mathbb{Z}_{m-2}}\mp\frac{2\sqrt{m-2}}{m+2}t
\end{split}
\eeq 
Equation (2.12) admits a discrete ``exchange'' symmetry that maps one solution of (2.13) to the other.
Concretely, the transformation
\[
(\alpha,\beta,\gamma,\delta)\ \longrightarrow\ (1-\alpha,\ 1-\beta,\ 1-\gamma,\ -\delta)
\]
leaves Eq.~(2.9) invariant but exchanges the two commuting Virasoro generators,
$T_{\min}\leftrightarrow T_{\mathbb{Z}_m}$, while keeping their sum fixed.
Thus the two branches in Eq.~(2.13) should be viewed as the two possible assignments of the two
mutually commuting stress tensors inside the enlarged theory.

Our naming convention is fixed by the $\mathcal{N}=1$ superconformal structure:
by definition, $T_{\min}$ is the stress tensor that appears in the $\mathcal{N}=1$ super-Virasoro algebra
together with the supercurrent $G$, whereas $T_{\mathbb{Z}_m}$ is the commuting spectator sector.
Equivalently, among the two branches in Eq.~(2.13), we choose the one for which
$T_{\min}(z)G(w)$ has the standard primary OPE with conformal weight $3/2$ and
$T_{\mathbb{Z}_m}(z)G(w)$ is regular, as imposed in Eq.~(2.14).
This criterion removes the ambiguity and fixes the assignment unambiguously (up to an overall sign
convention for $t$).

Now, we further decompose the supercurrent operator $G$ in terms of operators from different sectors. To this end, we have
\beq
\begin{aligned}
&T_{min}(z)G(w)\sim\frac{\frac{3}{2}G(w)}{\left(z-w\right)^2}\quad T_{\mathbb{Z}_m}(z)G(w)\sim 0\\
&G(z)G(w)\sim\frac{\frac{2c}{3}}{\left(z-w\right)^3}
\end{aligned}
\eeq
These relations give us a unique decomposition of $G$ with the exception of $m=4$ (it has $\mathcal{N}=2$ superconformal algebra and therefore the decomposition of $G$ is not unique due to the emergent $U(1)$ $R$-symmetry). To derive Eqs.~(2.15) and (2.16) systematically, we start from an operator-basis viewpoint.
In the product theory $\mathbb{Z}_{m-2}\times \mathbb{Z}_2 \times U(1)$, locality and neutrality
severely restrict the possible dimension-$3/2$ operators that can contribute to the supercurrent.
Up to overall normalization, the relevant local candidates are
\[
\varphi_1\,e^{+i a\phi},\qquad \bar{\varphi}_1\,e^{-i a\phi},\qquad \gamma\,\partial\phi,
\qquad a=\sqrt{\frac{m}{m-2}},
\]
which all have scaling dimension $3/2$.
We therefore write the most general ansatz as a linear combination of these operators.
The discrete ``particle-hole'' symmetry in Eq.~(2.17) rules out the symmetric combination
$\varphi_1 e^{i a\phi}+\bar{\varphi}_1 e^{-i a\phi}$, leaving an ansatz of the form
\[
G \;=\; A\Big(\varphi_1 e^{i a\phi}-\bar{\varphi}_1 e^{-i a\phi}\Big)\;+\;B\,\gamma\,\partial\phi.
\]
Imposing the defining OPE constraints in Eq.~(2.14) (namely, that $G$ is a weight-$3/2$ primary under
$T_{\min}$, is neutral under $T_{\mathbb{Z}_m}$, and that $G\times G$ reproduces the correct
$\mathcal{N}=1$ super-Virasoro OPE) fixes the coefficients uniquely up to an overall sign.
The difference between even and odd $m$ arises from locality/monodromy conditions, which determine
the relative phases needed for $G$ to be a single-valued local field.
For $m$ odd, 
\begin{equation}
\begin{aligned}
&G=\pm i\sqrt{\frac{2(m-2)}{m(m+2)}}\left(\varphi_1e^{i\sqrt{\frac{m}{m-2}}\phi}-
\bar\varphi_1 e^{-i\sqrt{\frac{m}{m-2}}\phi}\right)\mp i\sqrt{\frac{m-2}{m+2}}\gamma \partial \phi\\
&t=\frac{i}{\sqrt{2}}\left(\varphi_1\gamma e^{i\sqrt{\frac{m}{m-2}}\phi}+\bar\varphi_1\gamma e^{-i\sqrt{\frac{m}{m-2}}\phi}\right)
\end{aligned}
\label{G}
\end{equation} and for $m$ even, 
\beq
\begin{aligned}
&G=\pm\sqrt{\frac{2(m-2)(m+4)}{m(m+2)(m-4)}}\left(\varphi_1 e^{i\sqrt{\frac{m}{m-2}}\phi}-\bar\varphi_1 e^{-i\sqrt{\frac{m}{m-2}}\phi}\right)\pm i \sqrt{\frac{(m-2)(m+4)}{(m+2)(m-4)}}\gamma\partial \phi\\
&t=\frac{1}{\sqrt{2}}\left(\varphi_1\gamma e^{i\sqrt\frac{m}{m-2}\phi}+\bar\varphi_1\gamma e^{-i\sqrt{\frac{m}{m-2}}\phi}\right)
\end{aligned}
\eeq where $\{\varphi_0=\mathbb{I},\varphi_1,\varphi_2,\cdots,\varphi_{m-3}=\bar{\varphi}_1\}$ are the $\mathbb{Z}_{m-2}$ parafermion primary fields, $\gamma$ is a Majorana fermion from Ising CFT and $\phi$ is a $U(1)$ boson field. Parafermions (including $\mathbb{Z}_{m-2=2}$,  Majorana fermion) $\varphi_i$ has scaling dimension $\Delta=\frac{i(m-2-i)}{m-2}$ and fusion rule $\varphi_i \times \varphi_j=\varphi_{(i+j) \text{ mod } (m-2)}$. Boson field has compactification radius $\phi=\phi+2\pi$. Its vertex operators $e^{ia \phi}$ has  scaling dimension $\Delta=\frac{a^2}{2}$. From these relations, we can conclude that $\Delta(\varphi_1 e^{i\sqrt{\frac{m}{m-2}}\phi})=\Delta(\bar{\varphi}_1 e^{-i\sqrt{\frac{m}{m-2}}\phi})=\Delta(\gamma \partial \phi)=\frac{3}{2}$.

Curiously, $\varphi_1 e^{i\sqrt{\frac{m}{m-2}}\phi}$ ($\bar\varphi_1 e^{-i\sqrt{\frac{m}{m-2}}\phi}$) appeared in Eq. (\ref{G}) is the same as the electron operator of the $(m-2)$th member in Read-Rezayi quantum Hall states with filling factor $\nu=\frac{m-2}{m}$ even though $\varphi_1 e^{i\sqrt{\frac{m}{m-2}}\phi}$ ($\bar\varphi_1 e^{-i\sqrt{\frac{m}{m-2}}\phi}$) does not carry any physical charge. In low energy, it is known that Read-Rezayi states at filling factor $\nu=\frac{m-2}{m}$ possess emergent $\mathcal{N}=2$ supersymmetry with the physical $U(1)$ charge being its $R$-symmetry \cite{Santos}. It is therefore not surprising to see the appearance of $\varphi_1 e^{i\sqrt{\frac{m}{m-2}}\phi}-\bar\varphi_1 e^{-i\sqrt{\frac{m}{m-2}}\phi}$ in decomposition of $\mathcal{N}=1$ supercurrent operator $G$.  The other generator for the other half of supersymmetry in this rotated basis is $\varphi_1 e^{i\sqrt{\frac{m}{m-2}}\phi}+\bar\varphi_1 e^{-i\sqrt{\frac{m}{m-2}}\phi}$. We observe that the following ``particle-hole'' (since there is no real physical charge) symmetry 
\beq
\phi\Longleftrightarrow -\phi,~ \varphi_1 \Longleftrightarrow  -\bar{\varphi_1}, ~ \gamma \Longleftrightarrow  -\gamma
\eeq takes $G\rightarrow G$ and forbids $\varphi_1 e^{i\sqrt{\frac{m}{m-2}}\phi}+\bar\varphi_1 e^{-i\sqrt{\frac{m}{m-2}}\phi}$.

Lastly, from our decomposition, we see that as $m\rightarrow \infty$, central charge of SCFTs goes as $c\rightarrow \frac{3}{2}$. Operator-wise, $T_{min}(z)\rightarrow T_{\mathbb{Z}_2}+T_{U(1)}$ and $G\rightarrow \gamma\partial\phi$ which makes it clear that the conformal field theory in the infinity limit is the three Majorana fermion theory which is on the moduli space of $c=\frac{3}{2}$ circle line \cite{Dixon1988}. This is analogous to the unitary minimal models where they become the $U(1)$ free boson theory on the moduli space of $c=1$ circle line in the $m\rightarrow \infty$ limit \cite{Ginsparg}. 

\section{Fractional Quantum Hall wavefunction}
\label{3}
To study fractional quantum Hall states based on $\mathcal{N}=1$ minimal models, we need to combine our $\mathcal{N}=1$ minimal models (the charge neutral sector) with a $U(1)$ physical charge sector. First, let us define an electron operator that has trivial fusion structure and also braid trivially with all other quasi-particles. Formally, we can write this electron operator as $\psi_e=Ge^{i\sqrt{p/q}\Phi}$ where $G$ is the supercurrent operator of our $\mathcal{N}=1$ theories and $\Phi$ is the boson field representing the $U(1)$ charge. $p, q$ coprime $\in \mathbb{Z}^+$ gives a filling factor of $\frac{q}{p}$ for our FQH state (assuming our ground state only occupy the lowest Landau level). By the bulk-boundary correspondence of quantum Hall physics, its ground state wavefunction is the same as the chiral correlation function from the edge $\mathcal{N}=1$ SCFT and the $U(1)$ theory:
\beq
\la\psi_e(z_1)\cdots \psi_e(z_{N}) \ra=\la G(z_1)\cdots G(z_N)\ra \la e^{i\sqrt{p/q}\Phi(z_1)}\cdots e^{i\sqrt{p/q}\Phi(z_N)} \ra=C_N \prod_{i<j}^N(z_i-z_j)^{p/q} 
\eeq where $p/q$ is chosen such that $\la\psi_e(z_1)\cdots \psi_e(z_{N}) \ra$ is a polynomial with no pole and branch-cut. From this, we see that the task of computing this wavefunction is really about computing $C_N\equiv \la G(z_1)\cdots G(z_N)\ra$.
\subsection{Correlators of supercurrent operator G}
In this section, we compute the $n$ point correlators of supercurrent operators $G$ using free field methods \cite{10.1215/S0012-7094-92-06604-X,Gaberdiel:1998fs,Gaberdiel:1999mc}. Physically, free field methods utilize free bosonic, fermionic or ghost theories to represent a two dimensional conformal field theory. The simplest example is the well-known "Coulomb-Gas" formalism. Free field methods allow us to compute correlation functions of conformal fields via Wick's theorem.\smallskip

Before calculations for the supercurrent $G$, as a demonstration of principle, we apply free field methods on stress-energy tensor $T$ to calculate its $n$ correlation functions. 
We start from Virasoro algebra, which is generated by:
\begin{equation}
    T(z)T(z^\prime)=\frac{c/2}{\left(z-z^\prime\right)^4}+\frac{2T(z^\prime)}{\left(z-z^\prime\right)^2}+\frac{\partial T(z^\prime)}{z-z^\prime}+\mathcal{O}(1)\label{OPE1}
\end{equation}
For the $n$ point correlator of stress energy tensors, we have the following theorem:
\begin{theorem}
\begin{equation}
\la T(z_{1})T(z_{2})\cdots T(z_{n})\ra=\mathop{\mathop{\sum} \limits_{\sigma\in S_{n}}}\limits_{\sigma=(l_{1})\cdots(l_{s})}\mathop{\prod}\limits_{i=1}^{s}f_{l_{i}}\label{them1},
\end{equation}
where the perumutation $\sigma=(l_{1})...(l_{s})\in S_{n}$ is  equivalent to product of  cyclic permutations of length at least 2, and for each cyclic permutation $(l_{i})=(i_{1}i_{2}...i_{m_{l}})$, define 
\begin{equation}
f_{l_{i}}(z_{i_{1}},z_{i_{2}},\cdots,z_{i_{m_{l}}})=\frac{c/2}{(z_{i_{1}}-z_{i_{2}})^{2}(z_{i_{2}}-z_{i_{3}})^{2}\cdots(z_{i_{m_{l}}}-z_{i_{1}})^{2}}.
\end{equation}
\end{theorem}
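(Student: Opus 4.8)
The plan is to prove the formula by induction on $n$, treating the operator product expansion in Eq.~(\ref{OPE1}) as a conformal Ward identity that reconstructs the $n$-point function from its singularities. The base cases are immediate: for $n=1$ the only element of $S_1$ has a fixed point, so the sum is empty and reproduces $\langle T(z_1)\rangle=0$, while for $n=2$ the unique fixed-point-free permutation is the transposition $(12)$, giving $f_{(12)}=\frac{c/2}{(z_1-z_2)^4}$. For the inductive step I would regard $\langle T(z_1)\cdots T(z_n)\rangle$ as a function of $z_1$ alone. It is rational with poles only at $z_1=z_j$, and global conformal invariance forces it to decay like $z_1^{-4}$ as $z_1\to\infty$; hence it equals the sum over $j=2,\dots,n$ of its principal parts at $z_1=z_j$, each of which is dictated by the $T(z_1)T(z_j)$ OPE:
\begin{equation}
\langle T(z_1)\cdots T(z_n)\rangle=\sum_{j=2}^{n}\left[\frac{c/2}{(z_1-z_j)^4}\,\langle\cdots\rangle_{\hat1\hat j}+\frac{2}{(z_1-z_j)^2}\,\langle\cdots\rangle_{\hat1}+\frac{1}{z_1-z_j}\,\partial_{z_j}\langle\cdots\rangle_{\hat1}\right],
\end{equation}
where $\langle\cdots\rangle_{\hat1}$ and $\langle\cdots\rangle_{\hat1\hat j}$ denote the correlators with $T(z_1)$, respectively $T(z_1)$ and $T(z_j)$, removed.

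Next I would insert the inductive hypothesis for these $(n-1)$- and $(n-2)$-point functions and match the result against the claimed permutation sum, organized by the cycle through the label $1$. The length-two cycles $(1\,j)$ contribute $\frac{c/2}{(z_1-z_j)^4}$ times a fixed-point-free permutation of the remaining $n-2$ labels, which by the hypothesis is $\langle\cdots\rangle_{\hat1\hat j}$; this reproduces the central-charge term. The cycles of length $\geq 3$ through $1$ are the heart of the argument: deleting $1$ from such a cycle yields a fixed-point-free permutation $\tau$ of $\{2,\dots,n\}$, and conversely each $\tau$ together with a choice of predecessor $b$ of $1$ (equivalently, an edge $b\to\tau(b)$ into which $1$ is inserted) recovers a unique such $\sigma$. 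Since insertion leaves the number of cycles unchanged, the weight transforms as $f_\sigma=f_\tau\,(z_b-z_{\tau(b)})^2/[(z_1-z_b)^2(z_1-z_{\tau(b)})^2]$.

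The main obstacle is the algebraic identity that turns this insertion sum into the two descendant terms. Writing $u=z_b$, $v=z_{\tau(b)}$, the partial-fraction decomposition
\begin{equation}
\frac{(u-v)^2}{(z_1-u)^2(z_1-v)^2}=\frac{1}{(z_1-u)^2}+\frac{1}{(z_1-v)^2}-\frac{2}{u-v}\,\frac{1}{z_1-u}+\frac{2}{u-v}\,\frac{1}{z_1-v}
\end{equation}
splits each inserted term into double and simple poles in $z_1$. Summing the double-pole pieces over $b$ uses that $b\mapsto\tau(b)$ is a bijection of $\{2,\dots,n\}$, so the two families combine into $2\sum_j (z_1-z_j)^{-2}\langle\cdots\rangle_{\hat1}$, the second term. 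The simple-pole pieces, after reindexing one family by $j=\tau(b)$, assemble into $\sum_j (z_1-z_j)^{-1}$ times $\partial_{z_j}\langle\cdots\rangle_{\hat1}=\sum_\tau f_\tau\big[2/(z_{\tau^{-1}(j)}-z_j)-2/(z_j-z_{\tau(j)})\big]$, matching the third term and closing the induction. I expect the two reindexings — recognizing that the simple-pole residues are precisely the logarithmic derivative $\partial_{z_j}f_\tau$ — to be the step most prone to sign and index errors.

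Finally, as a cross-check aligned with the paper's free-field philosophy, I would verify the formula at $c$ equal to a positive integer $N$ by representing $T=-\tfrac12\sum_{a=1}^N{:}\partial\phi_a\partial\phi_a{:}$ and applying Wick's theorem: the admissible contractions are $2$-regular graphs, i.e.\ disjoint unions of cycles, the flavor sum gives one factor of $N$ per cycle, and the two orientations of each cycle of length $\geq 3$ supply the factor relating $N$ to $c/2$. Since the OPE recursion already shows $\langle T(z_1)\cdots T(z_n)\rangle$ is a polynomial in $c$ of degree at most $\lfloor n/2\rfloor$, agreement at $N=1,\dots,\lfloor n/2\rfloor+1$ would extend the identity to all $c$.
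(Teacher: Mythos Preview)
Your inductive proof via the OPE/Ward identity is correct and takes a genuinely different route from the paper. The paper's argument is purely the free-field one you sketch as a cross-check: it shows (Lemma~\ref{lemma: T correlators are rational functions}) that the correlator is rational in $c$, then realizes the Virasoro algebra with $N$ free bosons at $c=N$, computes $\langle T\cdots T\rangle$ by Wick's theorem, identifies the contractions with unions of loops carrying weight $N/2$ each, and invokes Lemma~\ref{lemma: identical rational functions} to extend from $c\in\mathbb{Z}_{>0}$ to all $c$. Your primary argument instead verifies directly that the cycle-sum formula $F(c;z_1,\dots,z_n)$ obeys the same recursion in $z_1$ as the correlator, using the insertion/deletion bijection between fixed-point-free permutations of $\{1,\dots,n\}$ and pairs $(\tau,b)$, together with the partial-fraction identity for $(u-v)^2/[(z_1-u)^2(z_1-v)^2]$. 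The key step you flag as delicate --- recognizing the simple-pole residues as $\partial_{z_j}\log f_\tau = 2/(z_{\tau^{-1}(j)}-z_j)-2/(z_j-z_{\tau(j)})$ --- is correct, including the degenerate case $\tau^{-1}(j)=\tau(j)$ of a $2$-cycle. What your approach buys is independence from any auxiliary free-field realization: the identity follows from the OPE alone, which makes the method portable to algebras without an obvious Wick calculus. What the paper's approach buys is that the combinatorics are done once and for all by Wick's theorem, avoiding the reindexing gymnastics; this is why the paper can reuse the same template almost verbatim for the supercurrent case with $\beta\gamma$--$bc$ ghosts. One small imprecision in your cross-check: the ``two orientations'' heuristic does not by itself account for the factor of $1/2$ on $2$-cycles; the paper instead tracks the $1/2^m$ from the definition of $T$ against the $2^{m-1}$ from Wick choices within a loop, which handles all cycle lengths uniformly.
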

We start the proof with two lemmas:

\begin{lemma}\label{lemma: T correlators are rational functions}
$\la T(z_{1})T(z_{2})\cdots T(z_{n})\ra$ is a rational function in variables $z_{1},z_{2},\cdots,z_{n}$ and central charge $c$.
\end{lemma}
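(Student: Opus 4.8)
The plan is to argue by induction on the number $n$ of insertions, viewing $\la T(z_1)\cdots T(z_n)\ra$ as a meromorphic function of the single variable $z_1$ with $z_2,\dots,z_n$ held fixed, and to close each step with Liouville's theorem. The base cases are immediate: $\la T(z_1)\ra=0$ and, since the one-point function vanishes, $\la T(z_1)T(z_2)\ra=\frac{c/2}{(z_1-z_2)^4}$, both manifestly rational. For the inductive step I would establish that, as a function of $z_1$, the correlator is holomorphic away from the coincidence loci $z_1=z_j$, that its principal part at each such pole is fixed by the OPE (\ref{OPE1}) in terms of lower-point correlators, and that it decays at infinity; Liouville then forces the correlator to equal the sum of its principal parts, which is rational by the inductive hypothesis.

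Three analytic inputs enter. First, because $T$ is a chiral (holomorphic) field, the correlator is holomorphic in $z_1$ off the diagonals $z_1=z_j$ — the standard analyticity of chiral correlators. Second, the short-distance behaviour near $z_1=z_j$ is governed by (\ref{OPE1}), so the singular part of $\la T(z_1)\cdots T(z_n)\ra$ at $z_1=z_j$ is
\beq
\frac{c/2}{(z_1-z_j)^4}\,\la\!\!\prod_{k\neq 1,j}\!\! T(z_k)\ra+\frac{2}{(z_1-z_j)^2}\,\la T(z_j)\!\!\prod_{k\neq 1,j}\!\! T(z_k)\ra+\frac{1}{z_1-z_j}\,\partial_{z_j}\la T(z_j)\!\!\prod_{k\neq 1,j}\!\! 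T(z_k)\ra,
\eeq
whose coefficients are $(n-1)$- and $(n-2)$-point correlators, rational in the remaining variables and polynomial in $c$ by induction. Third, since $T$ has weight $\Delta=2$ and the Schwarzian of $z\mapsto 1/z$ vanishes, one has $T(z)=z^{-4}T(1/z)$, so $\la T(z_1)\cdots T(z_n)\ra\to 0$ like $z_1^{-4}$ as $z_1\to\infty$.

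With these in hand the induction closes: subtracting from the correlator the sum over $j$ of the principal parts above yields a function of $z_1$ that is entire (every pole has been removed) and still vanishes as $z_1\to\infty$, because each principal-part term decays at least like $z_1^{-1}$ while the full correlator decays like $z_1^{-4}$. By Liouville's theorem this difference is identically zero, so $\la T(z_1)\cdots T(z_n)\ra$ equals the sum of its principal parts; its coefficients being rational in $z_2,\dots,z_n$ by induction, the correlator is jointly rational in all $z_i$, and since $c$ enters only through the leading OPE coefficient it appears polynomially. The main obstacle is the rigorous justification of the two analytic inputs — meromorphy in each variable with poles of bounded order, and the precise $z_1^{-4}$ decay at infinity — since it is exactly this decay that excludes an additive entire (polynomial) piece and makes Liouville applicable; granting the standard chiral-CFT analyticity together with the inversion law $T(z)=z^{-4}T(1/z)$, the remaining bookkeeping is routine.
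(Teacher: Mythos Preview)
Your proposal is correct and follows essentially the same induction as the paper: the paper simply asserts the Ward-identity recursion
\[
T_n(z_1,\dots,z_n)=\sum_{i=2}^n\Big(\frac{c/2}{(z_1-z_i)^4}T_{n-2}(\hat z_1,\hat z_i)+\big(\tfrac{2}{(z_1-z_i)^2}+\tfrac{\partial_i}{z_1-z_i}\big)T_{n-1}\Big)
\]
as a consequence of the OPE and then inducts, whereas you supply the standard Liouville derivation of that recursion (principal parts from the OPE plus $z_1^{-4}$ decay at infinity) before inducting. The two arguments coincide in structure and conclusion.
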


\begin{proof}
This lemma can be proven by mathematical induction. For $n=2$, the two point correlators of stress-energy tensor is $\la T(z_{1})T(z_{2})\ra=\frac{c/2}{(z_{1}-z_{2})^{4}}$, which by definition is a rational function in variables $z_{1},z_{2}$ and $c$. \\
Suppose the conclusion holds for any $n-1$ point correlators $T_{n-1}(z_{2},z_{3},...z_{n})=\la T(z_{2})T(z_{3})...T(z_{n})\ra$. By Eq (\ref{OPE1}), the $n$ point correlators satisfy the following recursion relation:
\begin{equation}
    T_{n}(z_{1},...z_{n})=\mathop{\sum}\limits_{i=2}^{n}\left(\frac{c/2}{(z_{1}-z_{i})^{4}}T_{n-2}(\hat{z_{1}},\hat{z_{i}})+(\frac{2}{(z_{1}-z_{i})^{2}}+\frac{\partial_{i}}{z_{1}-z_{i}})T_{n-1}(z_{2},z_{3},...z_{n})\right).
\end{equation}
Where  $T_{n}(z_{1},...z_{n})=\la T(z_{1})T(z_{2})...T(z_{n})\ra$ and  $T_{n-2}(\hat{z_{1}},\hat{z_{i}})=\la T(z_{2})...T(z_{i-1}),T(z_{i+1})...T(z_{n})\ra$.
By mathematical induction, $T_{n-2}(\hat{z_{1}},\hat{z_{i}})$ and $T_{n-1}(z_{2},z_{3},\cdots ,z_{n})$ are both rational function, $T_{n}(z_{1},\cdots,z_{n})$ is also a rational function. Therefore the conclusion holds for $n$ point correlators. By mathematical induction, this conclusion holds for all integers $n\geq 2$.
\end{proof}

\begin{lemma}\label{lemma: identical rational functions}
Let $f(x_{1},x_{2},\cdots,x_{M})$ and $g(x_{1},x_{2},\cdots,x_{M})$ be two rational functions over $\{x_i\}_{i=1}^M\in \mathbb{C}$. If $$f(y,x_{2},\cdots,x_{M})=g(y,x_{2},\cdots,x_{M})$$ holds for infinitely many $y\in\mathbb{C}$, then $f=g$ identically.
\end{lemma}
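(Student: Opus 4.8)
The plan is to reduce the claimed identity $f=g$ to the statement that a single polynomial vanishes identically. Writing $f=P/Q$ and $g=R/S$ with $P,Q,R,S$ polynomials in $x_1,\dots,x_M$ and $Q,S\not\equiv 0$, I set $N=PS-RQ$ and $D=QS\not\equiv 0$, so that $f-g=N/D$. Since a rational function is identically zero precisely when its numerator is, it suffices to prove that $N\equiv 0$ as a polynomial.

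The device I would use is to single out the distinguished variable $x_1$ and expand $N$ as a polynomial in the \emph{remaining} variables whose coefficients are polynomials in $x_1$ alone:
\[
N(x_1,x_2,\dots,x_M)=\sum_{\alpha} p_\alpha(x_1)\, x_2^{\alpha_2}\cdots x_M^{\alpha_M},
\]
where the sum runs over finitely many multi-indices $\alpha=(\alpha_2,\dots,\alpha_M)$ and each $p_\alpha$ is a polynomial in the single variable $x_1$. For a fixed scalar $y$, the specialization $N(y,x_2,\dots,x_M)$ is the zero polynomial in $x_2,\dots,x_M$ if and only if $p_\alpha(y)=0$ for every $\alpha$. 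The hypothesis will be massaged into the statement that $N(y,\cdot)\equiv 0$ for infinitely many $y$; since each $p_\alpha$ is then a univariate polynomial with infinitely many roots, every $p_\alpha$ must be the zero polynomial, whence $N\equiv 0$ and $f=g$.

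The one point that needs care — and which I expect to be the main obstacle — is bridging the hypothesis ``$f(y,\cdot)=g(y,\cdot)$'' to ``$N(y,\cdot)\equiv 0$,'' because for some $y$ the specialized denominators $Q(y,\cdot)$ or $S(y,\cdot)$ could be the zero polynomial, leaving $f(y,\cdot)$ or $g(y,\cdot)$ undefined. To handle this I would expand $Q=\sum_\beta q_\beta(x_1)\,x_2^{\beta_2}\cdots x_M^{\beta_M}$ and choose any $\beta$ with $q_\beta\not\equiv 0$; then $Q(y,\cdot)\equiv 0$ only at the finitely many roots of $q_\beta$, and likewise for $S$. Discarding these finitely many values from the infinitely many $y$ supplied by the hypothesis still leaves infinitely many $y$ at which $Q(y,\cdot)$ and $S(y,\cdot)$ are nonzero polynomials and $f(y,\cdot)=g(y,\cdot)$ holds. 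For each such $y$, equality of $P(y,\cdot)/Q(y,\cdot)$ and $R(y,\cdot)/S(y,\cdot)$ in the fraction field $\mathbb{C}(x_2,\dots,x_M)$ is equivalent to the polynomial identity $P(y,\cdot)S(y,\cdot)=R(y,\cdot)Q(y,\cdot)$, i.e. $N(y,\cdot)\equiv 0$, exactly the input required in the previous step.

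Everything beyond this bookkeeping is the elementary fact that a nonzero univariate polynomial has finitely many roots. I would also remark that the argument is symmetric in which variable is singled out, and that it goes through verbatim with $\mathbb{C}$ replaced by any infinite field; this is all the later application needs, where the distinguished variable $x_1$ plays the role of the central charge $c$ and $x_2,\dots,x_M$ are the insertion points $z_i$, so that matching correlators at infinitely many values of $c$ forces the rational identity in the $z_i$ to hold for all $c$.
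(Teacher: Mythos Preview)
Your proof is correct and follows essentially the same approach as the paper's: both reduce to the elementary fact that a nonzero univariate rational function (or polynomial) has only finitely many zeros. The paper's version is a one-liner that fixes $x_2,\dots,x_M$ and regards $f-g$ as a rational function in $x_1$ with infinitely many roots, while you clear denominators and expand $N=PS-RQ$ in the remaining variables, carefully discarding the finitely many $y$ where a denominator degenerates; the extra bookkeeping makes your argument more rigorous, but the core idea is identical.
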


\begin{proof}
Fixing $x_2,\cdots,x_M$, the function $f(x_1,x_2,\cdots,x_M)-g(x_1,x_2,\cdots,x_M)$ is a rational function in $x_1$, and it has infinitely many zeroes, but the number of zeroes of a rational function is bounded by the degree of its numerator, thus $f(x_1,x_2,\cdots,x_M)-g(x_1,x_2,\cdots,x_M)=0$ identically. 
\end{proof}

Let 
\begin{equation}
F(c,z_{1},z_{2},\cdots,z_{n})=\mathop{\mathop{\sum} \limits_{\sigma\in S_{n}}}\limits_{\sigma=(l_{1})\cdots(l_{s})}\mathop{\prod}\limits_{i=1}^{n}f_{l_{i}}.
\end{equation}
If we can show that $$\la T(z_{1})T(z_{2})\cdots T(z_{n})\ra=F(c,z_{1},z_{2},\cdots,z_{n})$$ for infinitely many central charges $c$, then the theorem is proven.

At this point we emphasize that the ``free-field check'' is used only as a device to generate
infinitely many values of the central charge.
By Lemma~3.1.1, for fixed insertion points $\{z_i\}$ the correlator
$\langle T(z_1)\cdots T(z_n)\rangle$ is a rational function of $c$ (and of the $z_i$'s).
The closed-form expression $F(c,z_1,\dots,z_n)$ in Eq.~(3.6) is manifestly rational in $c$ as well.
Therefore, their difference is a rational function of $c$; if it vanishes for infinitely many values of $c$,
it must vanish identically by Lemma~3.1.2.
This justifies why proving the identity for the infinite sequence of free-field realizations
($c=N$ for free bosons, and similarly $c=3N$ in the $\beta\gamma$--$bc$ construction for supercurrents)
establishes the result for general $c$, without invoking any additional analytic-continuation assumptions.

\begin{proof}
We construct the free field realization of Virasoro algebra via free bosons $\phi_{i}$, $i=1,\cdots,N$ with OPE: 
\begin{equation}
\partial\phi_{i}(z)\partial\phi_{j}(w)=\frac{\delta_{ij}}{(z-w)^{2}}+O(1).
\end{equation}
The action for our free field realization is:
\begin{equation}
\int d^{2}z\mathop{\sum}\limits_{i=1}^{N}\frac{1}{2}\partial\phi_{i}\mathop{\partial}\limits^{-}\phi_{i}
\end{equation}
with stress energy tensor $T(z)=\mathop{\sum}\limits_{i=1}^{N}:\frac{1}{2}\partial\phi_{i}(z)\partial\phi_{i}(z):$. One can check that the OPE of this stress energy tensor satisfy Eq (\ref{OPE1}) with central charge $c=N$. Then the correlator $\la T(z_{1})T(z_{2})\cdots T(z_{n})\ra$ can be computed using Wick's theorem, namely we sum over all possible contractions between pair of free bosonic fields:
\begin{equation}
\frac{1}{2^{n}}\wick{\la :\c4\partial\phi_{i_{1}}(z_{1})\c1\partial\phi_{i_{1}}(z_{1}): :\c1\partial\phi_{i_{2}}(z_{2})\c2\partial\phi_{i_{2}}(z_{2}):.\c4...\c3..:\c2\partial\phi_{i_{k}}(z_{k})\c3\partial\phi_{i_{k}}(z_{k}):\ra}.
\end{equation}

Note that graphically, for all possible contractions, each vertex $z_{i}$ ($i=1,2,...,n$) must connect with other two vertices $z_{j}$ and $z_{k}$ ($j, k \neq i$). Therefore, for each possible contraction, we first decompose the $n$ vertices into different clusters, then we connect all the vertices in each cluster to form a loop. Physically, the amplitude associated with each contraction is the product of the amplitude of each loop, which is:
\begin{equation}
\frac{N/2}{(z_{i_{1}}-z_{i_{2}})^{2}\cdots(z_{i_{m_{l}}}-z_{i_{1}})^{2}}\times\frac{N/2}{(z_{j_{1}}-z_{j_{2}})^{2}\cdots(z_{j_{m_{j}}}-z_{j_{1}})^{2}}\times\cdots\times\frac{N/2}{(z_{s_{1}}-z_{s_{2}})^{2}\cdots(z_{s_{m_{s}}}-z_{s_{1}})^{2}}.
\end{equation}
The overall $n$-point amplitude is the sum of amplitudes of all possible contractions. Figure \ref{Figure 1} gives a graph representation of each possible contraction.\\
\begin{figure}
    \centering
    \includegraphics[scale=0.6]{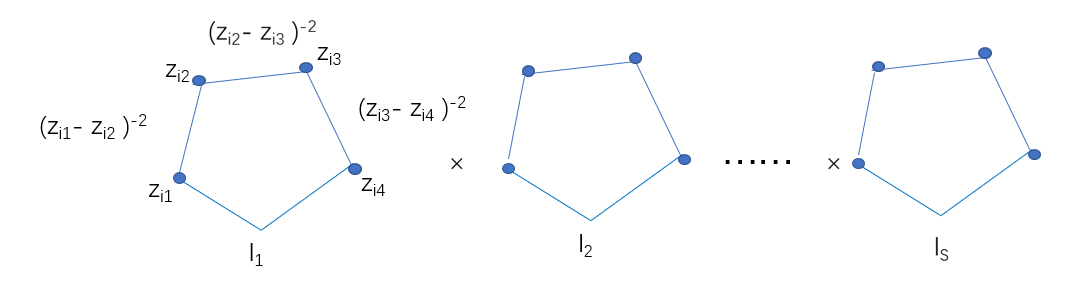}
    \caption{Every line represents a contraction between different pair of fields, $l_{i}$ labels a loop. By Wick's theorem, summing over all the possible products of loops gives the $n$ point amplitude.}
    \label{Figure 1}
\end{figure}

We give an interpretation to the factor $N/2$ associated with amplitude for each loop. It is easy to understand the factor of $N$ because there are $N$ free bosonic fields in the representation and the OPE between different fields vanishes. For a loop with $m$ vertices, each energy momentum tensor carries a factor of $1/2$ , which contributes a factor of $\frac{1}{2^{m}}$, and each double contraction gives a factor of 2, however, because the diagram is a loop, so all the double contractions contribute a factor of $2^{m-1}$. Multiply them together leads to a factor of $1/2$. Therefore the total factor is $N/2$.
Therefore $\la T(z_{1})\cdots T(z_{n})\ra_{c=N}=F(c=N,z_1,\cdots,z_n)$ for infinitely many $N\in\mathbb{Z}_{>0}$, thus the theorem holds by the Lemma \ref{lemma: identical rational functions}.
\end{proof}

Following Eq (\ref{them1}), an equivalent way of writing the correlator is:
\begin{equation}
    \la T(z_{1})\cdots T(z_{n})\ra=\mathop{\mathop{\sum} \limits_{I_{1}\cup I_{2}\cdots\cup I_{s}=(12\cdots n)}}\limits_{\text{cluster\,decomposition}}\limits \left(\frac{c}{2}\right)^{s}\mathop{\prod}\limits_{i=1}^{n}\hat{f}_{I_{i}},
\end{equation}
where \begin{equation}
\begin{aligned}
    &\hat{f_{I_{i}}}(z_{i_{1}},\cdots,z_{i_{M_{i}}})=\frac{1}{M_{i}}\mathop{\sum}\limits_{\sigma\in S_{M_{i}}}\frac{1}{(z_{\sigma(i_{1})}-z_{\sigma(i_{2})})^{2}(z_{\sigma(i_{2})}-z_{\sigma(i_{3})})^{2}\cdots(z_{\sigma(i_{M_{i}})}-z_{\sigma(i_{1})})^{2}}.
\end{aligned}
    \end{equation}
 $(z_{i_{1}},z_{i_{2}},...,z_{i_{M_{i}}})\in I_{i}$. We divide the $n$ vertices into $s$ clusters labeled as $I_{i}$($i=1,2,...s$), where each $I_{i}$ contains $|I_{i}|=M_{i}$ elements. Each cluster $I_{i}$ contains at least two vertices. Below we give two examples:\\
\begin{equation}
    \begin{aligned}
        &n=3, s=1, I_{1}=(123)\\
        &n=4, s=2, I_{1}=(12),I_{2}=(34),\quad I_{1}=(13),I_{2}=(24),\quad I_{1}=(14),I_{2}=(23)\\
    \end{aligned}
\end{equation}
The factor $\frac{1}{M_{i}}$ is because that the Feynman diagram corresponding to each cluster $I_{i}$ is invariant under $M_{i}$ elements in permutation group $S_{M_{i}}$, where each element $\sigma_{j=1,2,\dots,M_i}$ is:
\begin{equation}
\begin{aligned}
    &\sigma_{j}(z_{k})=z_{M_{i}-j+k},\quad \text{for}\quad k=1,2,...j\\
    &\sigma_{j}(z_{k})=z_{k-j}, \quad\text{for} \quad k=j+1,...,M_{i}.\\
\end{aligned}
\end{equation}

Now we use free field techniques to calculate the $n$ point correlation functions of supercurrent $G$. We start from $\mathcal{N}=1$ super Virasoro algebra, which is generated by:
\begin{equation}
    \begin{aligned}
T(z)T(w)=&\frac{c/2}{\left(z-w\right)^4}+\frac{2T(w)}{\left(z-w\right)^2}+\frac{\partial T(w)}{z-w}+\mathcal{O}(1)\\
T(z)G(w)=&\frac{(3/2)G(w)}{\left(z-w\right)^2}+\frac{\partial G(w)}{z-w}+\mathcal{O}(1)\\
G(z)G(w)=&\frac{(2c/3)}{\left(z-w\right)^3}+\frac{2T(w)}{z-w}+\mathcal{O}(1)\label{SUSY}
    \end{aligned}
\end{equation}
Our goal is to compute 
\begin{equation}
\la G(z_{1})G(z_{2})\cdots G(z_{n})\ra.
\end{equation}
Similar to the stress-energy correlators, we have the following:

\begin{lemma}\label{lemma: G correlators are rational functions}
$\la G(z_{1})G(z_{2})\cdots G(z_{n})\ra$ is a rational function in variables $z_{1},z_{2},\cdots,z_{n}$ and central charge $c$.
\end{lemma}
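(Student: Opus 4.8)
The plan is to mirror the proof of Lemma~\ref{lemma: T correlators are rational functions} almost verbatim, using induction on the number of insertions $n$ together with the recursion generated by the super-Virasoro OPEs in Eq.~(\ref{SUSY}). The one genuinely new feature is that the $G$--$G$ OPE produces a stress tensor, so collapsing two supercurrents does not return a pure-$G$ correlator. For this reason I would not try to prove the statement for $\la G\cdots G\ra$ in isolation; instead I would strengthen the inductive hypothesis to the claim that \emph{every} mixed chiral correlator $\la X_1(z_1)\cdots X_n(z_n)\ra$ with each $X_i\in\{T,G\}$ is a rational function of $z_1,\dots,z_n$ and $c$. Lemma~\ref{lemma: G correlators are rational functions} is then the special case $X_1=\cdots=X_n=G$.

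For the base case $n=2$ one checks directly from Eq.~(\ref{SUSY}) that $\la T(z_1)T(z_2)\ra=\tfrac{c/2}{(z_1-z_2)^4}$ and $\la G(z_1)G(z_2)\ra=\tfrac{2c/3}{(z_1-z_2)^3}$ are rational, while $\la T(z_1)G(z_2)\ra=0$ by the mismatch of conformal weights. For the inductive step I would argue, exactly as in Lemma~\ref{lemma: T correlators are rational functions}, that as a function of the first variable $z_1$ the correlator is meromorphic with poles only at the coincidences $z_1=z_i$, and that it decays as $z_1\to\infty$ (each field carries positive weight $\tfrac32$ or $2$); hence it equals the sum of its principal parts, read off from the OPEs. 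Besides the $T$--$T$ and $T$--$G$ OPEs already used for Lemma~\ref{lemma: T correlators are rational functions}, the two new ingredients are
\begin{equation}
\begin{aligned}
G(z_1)G(z_i)&\sim \frac{2c/3}{(z_1-z_i)^3}+\frac{2T(z_i)}{z_1-z_i},\\
G(z_1)T(z_i)&\sim \frac{(3/2)\,G(z_i)}{(z_1-z_i)^2}+\frac{(1/2)\,\partial_i G(z_i)}{z_1-z_i},
\end{aligned}
\end{equation}
the second following from Eq.~(\ref{SUSY}) after reordering the (bosonic) $T$ past $G$ and re-expanding about $z_i$.

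Each singular term in the resulting recursion is an explicit rational prefactor in $z_1-z_i$ and $c$ multiplying a correlator with strictly fewer insertions: collapsing $G(z_1)G(z_i)$ either deletes both fields ($n-2$ insertions) or replaces them by a single $T(z_i)$ ($n-1$ insertions), while collapsing $G(z_1)T(z_i)$ replaces the pair by $G(z_i)$ or $\partial_i G(z_i)$ ($n-1$ insertions). Since $\la\cdots\partial_i G(z_i)\cdots\ra=\partial_{z_i}\la\cdots G(z_i)\cdots\ra$, each reduced object is, by the inductive hypothesis, a rational function (possibly differentiated in $z_i$, which preserves rationality), and products and sums of rational functions are again rational. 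This closes the induction and yields the lemma.

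The main obstacle, and the only real departure from the $T$-only argument, is the need to enlarge the inductive hypothesis to mixed $T/G$ correlators and to keep careful track of the bookkeeping---which insertions survive, and the extra conformal weight of $G$ in the decay estimate at infinity. The analytic input that the correlator is meromorphic with poles confined to the coincident points and decays at infinity (so that regular OPE terms may be dropped and the function reconstructed from its principal parts) is the same standard fact underlying the recursion of Lemma~\ref{lemma: T correlators are rational functions}; once it is granted, the remaining steps are routine.
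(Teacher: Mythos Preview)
Your argument is correct, but it is not quite the route the paper takes. You explicitly choose to \emph{enlarge} the inductive hypothesis to all mixed $T/G$ correlators, on the grounds that the $G$--$G$ OPE spits out a $T$. The paper instead keeps the induction on pure $G$-correlators and immediately eliminates the stray $T(z_i)$ using the ordinary Virasoro Ward identity: since each remaining $G(z_j)$ is a Virasoro primary of weight $3/2$, one has
\[
\Big\langle T(z_i)\prod_{j\neq 1,i}G(z_j)\Big\rangle
=\sum_{j\neq 1,i}\Big(\tfrac{3/2}{(z_i-z_j)^2}+\tfrac{\partial_{z_j}}{z_i-z_j}\Big)\,G_{n-2}(\hat z_1,\hat z_i),
\]
so the two-step reduction $G(z_1)G(z_i)\to T(z_i)\to$ (Ward identity) lands directly back on an $(n-2)$-point pure-$G$ correlator. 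This yields the single closed recursion displayed in the paper and keeps the induction self-contained.

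Your approach proves a strictly stronger statement (rationality of all mixed $T/G$ correlators) at the cost of more bookkeeping; the paper's approach is leaner because it exploits the fact that a single $T$ insertion among primaries can always be traded away. Either is a complete proof of the lemma.
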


\begin{proof}
This lemma can be proven by mathematical induction. For $n=1$, the one point correlator of supercurrent vanishes, i.e $\la G(z_{1})=0\ra$. For $n=2$, the two point correlators of supercurrent is $\la G(z_{1})G(z_{2})\ra=\frac{2c/3}{(z_{1}-z_{2})^{3}}$, which is a rational function in variables $z_{1},z_{2}$ and $c$.\\
Suppose the conclusion holds for the $n-2$ point correlators $G_{n-2}(z_{3},z_{4},...z_{n})=\la G(z_{3})G(z_{4})...G(z_{n})\ra$, by Eq (\ref{SUSY}), the n point correlators satisfy the following recursion relation:
\begin{equation}
    G_{n}(z_{1},...z_{n})=\mathop{\sum}\limits_{i=2}^{n}\left((-1)^{i}\frac{2c/3}{(z_{1}-z_{i})^{3}}+(-1)^{i}\frac{1}{z_{1}-z_{i}}\mathop{\sum}\limits_{j\neq i,1}^{n}\left(\frac{3}{(z_{i}-z_{j})^{2}}+\frac{\partial_{j}}{z_{i}-z_{j}}\right)\right)G_{n-2}(\hat{z_{1}},\hat{z_{i}}),
\end{equation}
where $G_{n}(z_{1},...z_{n})=\la G(z_{1})G(z_{2})...G(z_{n})\ra$ and $G_{n-2}(\hat{z_{1}},\hat{z_{i}})=\la G(z_{2})...G(z_{i-1})G(z_{i+1})...G(z_{n})\ra$.
By mathematical induction, $G_{n-2}(\hat{z_{1}},\hat{z_{i}})$ is a  rational function, $G_{n}(z_{1},...z_{n})$ is also a rational function. Therefore the conclusion holds for $n$ point correlators. By mathematical induction, this conclusion holds for all integers $n\geq 3$. Therefore the lemma is proved.\\
Note that as a by-product, since $\la G_{1}(z_{1})\ra$ vanishes, by this recursion relation, $G_{n}(z_{1},...z_{n})=0$ if $n$ is odd. 
\end{proof}

To begin with, we construct free field representation of $\mathcal{N}=1$ super Virasoro algebra via free ghost fields \cite{Polchinski:1998rr}: $\beta\gamma-bc$ systems which combines two anticommuting fields $bc$  with two commuting fields $\beta\gamma$. Their weights are:
\begin{equation}
    \begin{aligned}
        &h_{b_{i}}=\lambda,\quad h_{c^{i}}=1-\lambda,\quad i=1,2,\cdots,N\\
        &h_{\beta_{j}}=\lambda-\frac{1}{2},\quad h_{\gamma^{j}}=\frac{3}{2}-\lambda,\quad j=1,2,\cdots,N\\
    \end{aligned}
\end{equation}
The action for our free field realization is:
\begin{equation}
    S=\frac{1}{2\pi}\int d^{2}z\mathop{\sum}\limits_{i=1}^{N}(b_{i}\overline{\partial} c^{i}+\beta_{i}\overline{\partial} \gamma^{i}),
\end{equation}
Their OPEs are:
\begin{equation}
    \begin{aligned}
        &b_{i}(z)c^{j}(w)=\frac{\delta_{i}^{j}}{z-w}+\mathcal{O}(1)\\
&\gamma^{i}(z)\beta_{j}(w)=\frac{\delta_{i}^{j}}{z-w}+\mathcal{O}(1)\\
    \end{aligned}
\end{equation}
The stress-energy tensor and supercurrent read:
\begin{equation}
    \begin{aligned}
        &T=(\partial b_{i}(z))c^{i}(z)-\lambda\partial(b_{i}(z)c^{i}(z))+(\partial\beta_{i}(z))\gamma^{i}(z)-\frac{1}{2}(2\lambda-1)\partial(\beta_{i}(z)\gamma^{i}(z))\\
        &G=-\frac{1}{2}(\partial\beta_{i})(z)c^{i}(z)+\frac{2\lambda-1}{2}\partial(\beta_{i}(z)c^{i}(z))-2b_{i}(z)\gamma^{i}(z)\\
    \end{aligned}
\end{equation}
One can check that their OPE satisfy Eq (\ref{SUSY}). Without loss of generality we choose $\lambda=\frac{1}{2}$, then the supercurrent reads:
\begin{equation}
    G(z)=-\frac{1}{2}:\partial\beta_{i}(z)c^{i}(z):-2:b_{i}(z)\gamma^{i}(z):\label{super}
\end{equation}
with central charge $c=3N$.\\
\begin{theorem}
    \begin{equation}
        \la G(z_{1})G(z_{2})...G(z_{2n})\ra=\mathop{\mathop{\sum} \limits_{\sigma\in S_{2n}}}\limits_{\sigma=(l_{1})\cdots(l_{s})}\mathop{\prod}\limits_{i=1}^{s}\mathrm{sign}\begin{pmatrix}
    1&2&\cdots&2n\\
    l_{1}&l_{2}&\cdots&l_{s}\\
\end{pmatrix}g_{l_{i}}\label{them3},
\end{equation}
where the perumutation $\sigma=(l_{1})...(l_{s})\in S_{2n}$ is  equivalent to product of  cyclic permutations of length at least 2, and for each cyclic permutation $(l_{i})=(i_{1}i_{2}...i_{2m_{l}})$, define 
\begin{equation}
\begin{split}
&g_{l_i}(z_{i_{1}}\cdots z_{i_{2m}})=\frac{c/3}{(z_{i_{1}}-z_{i_{2}})(z_{i_{2}}-z_{i_{3}})^{2}\cdots(z_{i_{2m-1}}-z_{i_{2m}})(z_{i_{2m}}-z_{i_{1}})^{2}}\\
&-\frac{c/3}{(z_{i_{1}}-z_{i_{2}})^{2}(z_{i_{2}}-z_{i_{3}})\cdots(z_{i_{2m-1}}-z_{i_{2m}})^{2}(z_{i_{2m}}-z_{i_{1}})}.
\end{split}
\end{equation}
\end{theorem}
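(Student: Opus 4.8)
The plan is to reproduce the three-step template that already worked for the stress-tensor correlator in \eqref{them1}: establish rationality in the central charge, reduce the claimed identity to the statement that it need only hold for infinitely many special values of $c$, and then verify it term by term inside the free-field (ghost) realization by Wick's theorem. By Lemma~\ref{lemma: G correlators are rational functions} the left-hand side $\la G(z_1)\cdots G(z_{2n})\ra$ is rational in $c$ and the $z_i$ (and the odd-point correlators vanish, so only the even case $2n$ is at issue), while the right-hand side of \eqref{them3} is manifestly rational in $c$. Hence, by Lemma~\ref{lemma: identical rational functions}, it suffices to prove the identity for the infinitely many values $c=3N$, $N\in\mathbb{Z}_{>0}$, realized by the $\beta\gamma$--$bc$ system with $G$ given in \eqref{super}.

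The core of the argument is the Wick expansion. Inserting \eqref{super} at each point, every factor $G(z_k)$ contributes either the ``$A$-type'' bilinear $-\tfrac12:\partial\beta_i(z_k)c^i(z_k):$ or the ``$B$-type'' bilinear $-2:b_i(z_k)\gamma^i(z_k):$. Since the only nonvanishing elementary contractions are $\la\partial\beta_i(z)\gamma^j(w)\ra=\delta_i^j(z-w)^{-2}$ and $\la b_i(z)c^j(w)\ra=\delta_i^j(z-w)^{-1}$, a complete contraction can be nonzero only if each $\partial\beta$ is paired with a $\gamma$ and each $c$ with a $b$, which forces exactly $n$ of the points to be $A$-type and $n$ to be $B$-type. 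Each surviving contraction is thus encoded by a $2$-colouring $\{1,\dots,2n\}=A\sqcup B$ together with a $\beta\gamma$-pairing $\pi_1\colon A\to B$ and a $bc$-pairing $\pi_2\colon A\to B$. Defining $\sigma\in S_{2n}$ by $\sigma|_A=\pi_1$ and $\sigma|_B=\pi_2^{-1}$, the cycles of $\sigma$ alternate between $A$ and $B$ vertices and therefore all have even length $2m_l\ge 2$, exactly reproducing the cyclic data $(l_i)=(i_1 i_2\cdots i_{2m_l})$ of \eqref{them3}. The Kronecker deltas force the $N$ flavour indices around each loop to coincide, so each loop carries one free index sum giving $N=c/3$, while the coefficient product $(-\tfrac12)^{m_l}(-2)^{m_l}=1$ is trivial; the $\beta\gamma$-edges supply the squared factors and the $bc$-edges the single factors, alternating around the loop precisely as in $g_{l_i}$. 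The two ways of $2$-colouring a fixed cycle (interchanging which sublattice is $A$) produce the two monomials of $g_{l_i}$, so a given $\sigma$ is hit by $2^s$ colourings that assemble into $\prod_{i=1}^s g_{l_i}$.

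I expect the only genuine work to be the signs. The task is to show that the fermionic reordering signs generated by the anticommuting $bc$-contractions — organized as the determinant $\det\bigl((z_a-z_b)^{-1}\bigr)_{a\in A,\,b\in B}$ together with the sign from interleaving the $b$'s and $c$'s in the fixed vertex ordering $z_1,\dots,z_{2n}$ — reproduce exactly the prefactor $\mathrm{sign}\begin{pmatrix}1&2&\cdots&2n\\ l_1&l_2&\cdots&l_s\end{pmatrix}$, and that the relative minus sign between the two monomials of $g_{l_i}$ is precisely the orientation reversal of the $bc$-edges induced by recolouring a cycle. The $\beta\gamma$-sector is bosonic and contributes no signs, so all sign data originate from the $bc$-determinant. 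It is worth checking consistency at $2n=2$, where $\sigma=(12)$, the two colourings each give $+N(z_1-z_2)^{-3}$, and $g_{(12)}=2(c/3)(z_1-z_2)^{-3}$ with overall sign $+1$, recovering $\la G(z_1)G(z_2)\ra=(2c/3)(z_1-z_2)^{-3}$; one should also note that any permutation containing an odd cycle simply never arises from the $A/B$-alternating structure, so the sum is effectively restricted to even-cycle permutations.

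Once the sign identity is verified loop by loop, summing over all admissible contraction data equals $\sum_{\sigma}\mathrm{sign}(\cdots)\prod_{i=1}^s g_{l_i}$ at $c=3N$, and Lemma~\ref{lemma: identical rational functions} promotes this equality from the infinitely many values $c=3N$ to all $c$, completing the proof. The main obstacle, to reiterate, is purely combinatorial bookkeeping of the fermionic signs and the base-point ambiguity of the cyclic notation (both $g_{l_i}$ and the $\mathrm{sign}$ factor flip under an odd rotation of a cycle's base point, so only their product is well defined); establishing that these two flips cancel is what makes \eqref{them3} consistent.
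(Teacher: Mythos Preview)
Your proposal is correct and follows essentially the same route as the paper: rationality in $c$ (Lemma~\ref{lemma: G correlators are rational functions}), the $\beta\gamma$--$bc$ free-field realization \eqref{super} with $c=3N$, Wick expansion into loops, and then Lemma~\ref{lemma: identical rational functions} to lift from $c=3N$ to all $c$. Your $A/B$-colouring description of the contraction data and the bijection to permutations with only even cycles is in fact more explicit than the paper's own treatment, which simply asserts the loop structure and attributes the overall sign to the anticommutativity of the $bc$ fields without unpacking it further.
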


Let
\begin{equation}
H(c,z_{1},z_{2},...z_{2n})=\mathop{\mathop{\sum} \limits_{\sigma\in S_{2n}}}\limits_{\sigma=(l_{1})\cdots(l_{s})}\mathop{\prod}\limits_{i=1}^{s}\mathrm{sign}\begin{pmatrix}
    1&2&\cdots&2n\\
    l_{1}&l_{2}&\cdots&l_{s}\\
\end{pmatrix}g_{l_{i}}.
\end{equation}
If we can show that:
\begin{equation}
    \la G(z_{1})G(z_{2})...G(z_{2n})\ra=H(c,z_{1},z_{2},...z_{2n})
\end{equation}
for infinitely many central charges $c$, then the theorem is proven.\\
\begin{proof}
We know by free field realization $\la G(z_{1})\cdots G(z_{n})\ra$ can be computed by Wick's theorem. Note that there are two differneces now: first, the $n$ point correlator of $G$ vanishes for odd $n$, so we only need to consider the case when $n=2M$. Each loop must contain even number of vertices, otherwise the amplitude associated with the loop will vanish in the OPE. Second, for each contraction between pair of supercurrents, as the supercurrent Eq (\ref{super}) has two terms, there are two propagators between each pair of vertices. We write them down explicity:
\begin{equation}
    \begin{aligned}
    &\wick{\la :\c4\partial\beta_{i_{1}}(z_{1})\c1 c_{i_{1}}(z_{1}): :\c1 b_{i_{2}}(z_{2})\c2\gamma_{i_{2}}(z_{2}):.\c4..\c5.\c6.\c3.:\c2\partial\beta_{i_{2M-1}}(z_{2M-1})\c3 c_{i_{2M-1}}(z_{2M-1})\c5 b_{i_{2M}}(z_{2M})\c6\gamma_{i_{2M}}(z_{2M}):\ra},\\
    &\wick{\la :\c4 b_{i_{1}}(z_{1})\c5\gamma_{i_{1}}(z_{1}): :\c2 \partial\beta_{i_{2}}(z_{2})\c1 c_{i_{2}}(z_{2}):.\c7.\c1.\c5.\c6.\c3.:\c7 b_{i_{2M-1}}(z_{2M-1})\c2 \gamma_{i_{2M-1}}(z_{2M-1})\c6 \partial\beta_{i_{2M}}(z_{2M})\c4 c_{i_{2M}}(z_{2M}):\ra}.\\     
    \end{aligned}
\end{equation}
Figure \ref{fig2} gives the corresponding Feynman diagram.
\begin{figure}
    \centering
    \includegraphics[scale=0.56]{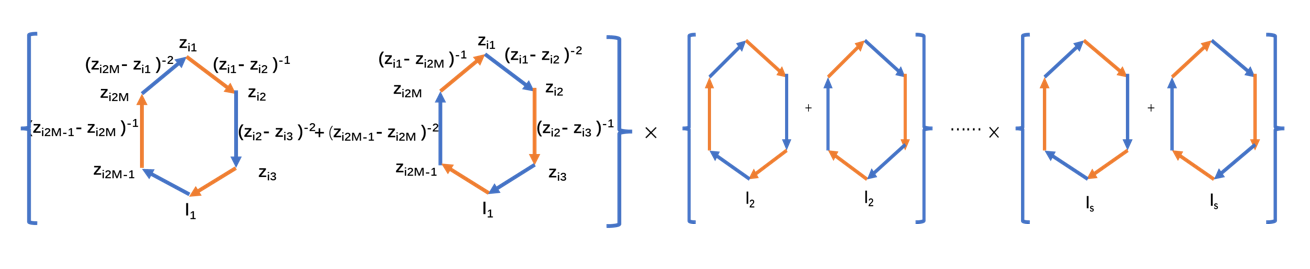}
    \caption{Orange line represents contraction between fields $b,c$, blue line represents contraction between fields $\partial\beta,\gamma$. $l_{i}$ labels a loop. For each loop, we add two graphs because they have the same underlying graph but different propagators. 
   By Wick's theorem, summing over all the possible products of loops gives the $2n$ point amplitude of supercurrent operators.}
    \label{fig2}
\end{figure}\\

The amplitude associated with each Feynman diagram is
\begin{equation}
\text{sign}\times g_{l_{1}}(z_{1_{1}}z_{1_{2}}\cdots z_{1_{2m_{1}}})\times g_{l_{2}}\times\cdots\times g_{l_{s}}.\\
\end{equation}
where sign is the signature of 
$\begin{pmatrix}
    1&2&\cdots&2n\\
    l_{1}&l_{2}&\cdots&l_{s}\\
\end{pmatrix}$
which is a reflection of anti-communitivity of $bc$ fields. And 
\begin{equation}
\begin{split}
&g_{l_i}(z_{i_{1}}\cdots z_{i_{2m_{i}}})=\frac{N}{(z_{i_{1}}-z_{i_{2}})(z_{i_{2}}-z_{i_{3}})^{2}\cdots(z_{i_{2m_{i}-1}}-z_{i_{2m_{i}}})(z_{i_{2m_{i}}}-z_{i_{1}})^{2}}\\
&-\frac{N}{(z_{i_{1}}-z_{i_{2}})^{2}(z_{i_{2}}-z_{i_{3}})\cdots(z_{i_{2m_{i}-1}}-z_{i_{2m_{i}}})^{2}(z_{i_{2m_{i}}}-z_{i_{1}})}\label{funcg}.
\end{split}
\end{equation}
where the factor of $N$ is a result of summing over all ghost fields. Since $c=3N$ in this free field realization, $\la G(z_{1})G(z_{2})...G(z_{2n})\ra_{c=3N}=H(c=3N,z_{1},z_{2},...z_{2n})$ holds for infitely many $N\in\mathbb{Z}_{>0}$, thus the theorem holds by the \textbf{Lemma} \ref{lemma: identical rational functions}.\\
\end{proof}
Below we give an equivalent way of writing the $2n$ points correlators of supercurrent:
\begin{theorem}
    \begin{equation}
\begin{aligned}
\la G(z_{1})\cdots G(z_{2n})\ra=\mathop{\sum \limits_{I_{1}\cup I_{2}\cdots\cup I_{s}=(1 2 \cdots 2n)}}\limits_{\text{cluster\,decomposition}}\left(\frac{2c}{3}\right)^{s}\times \mathrm{sign}\begin{pmatrix}
    1&2&\cdots&2n\\
    I_{1}&I_{2}&\cdots&I_{s}\\
\end{pmatrix}
\mathop{\prod}\limits_{i=1}^{s}\hat{g}_{I_{i}}\label{e1}
\end{aligned}
\end{equation}
where each cluster has  $|I_{i}|=2M_{i}$ elements, $n=M_{1}+M_{2}+\cdots+M_{s}$. Moreover, we have
\begin{equation}
\begin{aligned}
&\hat{g}_{I_{i}}(z_{i_{1}},z_{i_{2}},\cdots,z_{i_{2M_{i}}})\\
&=\frac{1}{2M_{i}}\mathop{\sum}\limits_{\sigma\in S_{2M_{i}}}\frac{\mathrm{sign}(\sigma)}{(z_{\sigma(i_{1})}-z_{\sigma(i_{2})})(z_{\sigma(i_{2})}-z_{\sigma(i_{3})})^{2}\cdots(z_{\sigma(i_{2M_{i}-1})}-z_{\sigma(i_{2M_{i}})})(z_{\sigma(i_{2M_{i}})}-z_{\sigma(i_{1})})^{2}},
\end{aligned}
\end{equation}
where $(z_{i_{1}},z_{i_{2}},...,z_{i_{2M_{i}}})\in I_{i}$.
\end{theorem}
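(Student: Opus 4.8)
The plan is to derive this statement directly from the preceding theorem, Eq.~(\ref{them3}), by reorganizing the signed sum over permutations into a sum over set partitions (cluster decompositions) followed by a sum over cyclic orderings within each cluster. The content is purely combinatorial: no new free-field computation is required, since Eq.~(\ref{them3}) already expresses $\la G(z_1)\cdots G(z_{2n})\ra$ as a signed sum over permutations $\sigma\in S_{2n}$ that factor into cycles of even length, each cycle contributing the two-term loop function $g_{l_i}$. A permutation of this type carries exactly the same information as a partition of $\{1,\dots,2n\}$ into even blocks $I_1,\dots,I_s$ together with a choice of one cyclic ordering on each block, so I would first split the outer sum accordingly and pull out the factor $(2c/3)^s$.

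The heart of the argument is to show that, for a fixed block $I_i$ of size $2M_i$, the sum over the distinct cycles supported on $I_i$ of the two-term function $g$ equals $\tfrac{2c}{3}\,\hat g_{I_i}$, i.e.\ that the symmetrized single-pattern sum defining $\hat g_{I_i}$ already reproduces \emph{both} terms of $g$. I would establish this by a rotation-averaging computation. Writing $P(a_1,\dots,a_{2M_i})$ for the single ``first-pattern'' summand appearing inside $\hat g_{I_i}$, the crucial observations are that $P$ is invariant under a cyclic rotation of the indices by two positions, while a rotation by one position exchanges the exponents $1\leftrightarrow 2$ on every edge and hence turns the first pattern into the second pattern of $g$. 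Since a cyclic shift by one position is a $2M_i$-cycle, its signature is $-1$; the $\mathrm{sign}(\sigma)$ weight therefore assigns a relative minus sign to the odd rotations, exactly matching the minus sign between the two terms of $g$.

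Concretely, I would group the $(2M_i)!$ orderings of $I_i$ into the $(2M_i-1)!$ rotation orbits, each of size $2M_i$. Within one orbit the $M_i$ even rotations all contribute $\mathrm{sign}(\sigma)\,P=+\,\mathrm{first}$ and the $M_i$ odd rotations all contribute $\mathrm{sign}(\sigma)\,P=-\,\mathrm{second}$, so the orbit sums to $M_i(\mathrm{first}-\mathrm{second})=\tfrac{3M_i}{c}\,g_{\text{cycle}}$ up to the overall sign of the chosen representative. Dividing by the prefactor $2M_i$ in $\hat g_{I_i}$ then yields $\hat g_{I_i}=\tfrac{3}{2c}\sum_{\text{cycles on }I_i}\mathrm{sign}\cdot g$, which is the claimed relation $\tfrac{2c}{3}\,\hat g_{I_i}=\sum_{\text{cycles on }I_i}\mathrm{sign}\cdot g$. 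Substituting this back and expanding the product over the $s$ clusters reproduces Eq.~(\ref{them3}) term by term, since choosing one cycle per block and multiplying the associated $g$'s is precisely the data of a single $\sigma$.

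Finally, I would verify that the signs bookkeep correctly. The total signature of $\sigma$ in Eq.~(\ref{them3}) should factor as the signature of the permutation rearranging $(1,2,\dots,2n)$ into the concatenated block order $(I_1,I_2,\dots,I_s)$ --- which is exactly the global $\mathrm{sign}$ prefactor in Eq.~(\ref{e1}) --- times the internal signatures already carried inside each $\hat g_{I_i}$. I expect this sign factorization to be the main obstacle: one must check that the splitting of $\mathrm{sign}(\sigma)$ into an inter-cluster and an intra-cluster part is consistent with the convention used in the definition of $\hat g_{I_i}$, and that the per-orbit ambiguity cancels uniformly (choosing an odd rather than even representative flips both $\mathrm{sign}$ and $\mathrm{first}\leftrightarrow\mathrm{second}$, leaving the orbit sum unchanged). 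Once the signs are pinned down, the remaining steps are routine reindexing and collection of constants.
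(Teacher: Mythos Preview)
Your proposal is correct and follows essentially the same approach as the paper: both derive Eq.~(\ref{e1}) as a combinatorial reorganization of Eq.~(\ref{them3}), using the observation that the two terms of $g_{l_i}$ are related by a one-step cyclic rotation (with sign $-1$), the factorization of the total sign into inter-cluster and intra-cluster pieces, and the $M_i$-fold rotational redundancy to account for the $1/(2M_i)$ prefactor. Your rotation-orbit averaging makes explicit what the paper states tersely, and your treatment of why the per-orbit sign ambiguity cancels is a useful addition.
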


\begin{proof}
This is a direct result from Eq (\ref{funcg}). Note that the first and second term of Eq (\ref{funcg}) are related via a  permutation $\sigma$ from $(i_{1},i_{2},\cdots,i_{2M_{i}})$ to $(i_{2},i_{3},\cdots,i_{2M_{i}},i_{1})$.\\
Every $I_{i}$ is ordered compared with $l_{i}$, for example: if $l_{i}=(3,1,2,4)$, then $I_{i}=(1,2,3,4)$. Their sign are related:
\begin{equation}
    \mathrm{sign}\begin{pmatrix}
    1&2&\cdots&2n\\
    l_{1}&l_{2}&\cdots&l_{s}\\
\end{pmatrix}=\mathrm{sign}\begin{pmatrix}
    1&2&\cdots&2n\\
    I_{1}&I_{2}&\cdots&I_{s} \\
\end{pmatrix}\times\mathop{\mathop{\prod}\limits_{\sigma(i)\in S_{2M_{i}}}}\limits_{1\leq i\leq s}\text{sign}(\sigma(i))
\end{equation}
The factor of $\frac{1}{2M_{i}}$ is because that the Feynman diagram corresponding to each cluster $I_{i}$ is invariant under $M_{i}$ elements in permutation group $S_{2M_{i}}$, where each element $\sigma_{j=1,2,...,M_{i}}$ is:
\begin{equation}
\begin{aligned}
    &\sigma_{j}(z_{k})=z_{2M_{i}-2j+k}, \quad\text{for}\quad k=1,2,...2j\\
    &\sigma_{j}(z_{k})=z_{k-2j}, \quad\text{for} \quad k=2j+1,...,2M_{i}.\\
\end{aligned}
\end{equation}
\end{proof}
Note here that for supercurrent, the number of vertices in each cluster $|I_{i}|$($i=1,2,...,s$) must be even. However, for stress-energy tensor, $|I_{i}|$ could be even or odd.
\subsection{Clustering behaviours of the wavefunction}
From the results in previous section, we now explore how the structure of supercurrent correlation functions affect clustering behaviours of their fractional quantum Hall wavefunctions. Without loss of generality, let us set $p=3,~ q=1$ for $U(1)$ charge sector such that wavefunctions are symmetric polynomials.  
\begin{corollary}\label{cor: clustering}
Let $1\leq k\leq n$, then 
\begin{equation}
\begin{split}
    \lim_{z_{2i-1}\rightarrow z_{2i}}\mathop{\prod}\limits_{i=1}^{k}(z_{2i-1}-z_{2i})^{3}\la G(z_{1})\cdots G(z_{2n})\ra=\left(\frac{2c}{3}\right)^{k}\la G(z_{2k+1})\cdots G(z_{2n})\ra\label{clust}
    \end{split}
\end{equation}
\end{corollary}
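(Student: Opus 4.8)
The plan is to peel off the coincident pairs one at a time using the leading singularity of the $G\,G$ operator product expansion, and then to verify that this iterated procedure reproduces the genuinely simultaneous limit in \eqref{clust}. First I would treat a single pair. Inside the correlator, as $z_1\to z_2$ I replace $G(z_1)G(z_2)$ by the OPE \eqref{SUSY}, $G(z_1)G(z_2)=\frac{2c/3}{(z_1-z_2)^3}+\frac{2T(z_2)}{z_1-z_2}+\mathcal{O}(1)$. After multiplying by $(z_1-z_2)^3$, the order-three term contributes the $c$-number $\frac{2c}{3}$ times the reduced correlator $\langle G(z_3)\cdots G(z_{2n})\rangle$; the simple-pole term becomes $2(z_1-z_2)^2\,\langle T(z_2)G(z_3)\cdots G(z_{2n})\rangle$ and the regular remainder is multiplied by $(z_1-z_2)^3$, so both vanish as $z_1\to z_2$ since the mixed correlator is independent of $z_1$ and finite there. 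The decisive point is that the $GG$ OPE carries no $(z_1-z_2)^{-2}$ term, so the sole survivor is the $c$-number coefficient, giving the case $k=1$.

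I would then induct on $k$. The reduced object $\langle G(z_3)\cdots G(z_{2n})\rangle$ is again a supercurrent correlator of the same form with $2(n-1)$ insertions, so applying the $k=1$ result to the pair $(z_3,z_4)$ and iterating produces the overall factor $(2c/3)^k$ together with $\langle G(z_{2k+1})\cdots G(z_{2n})\rangle$.

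The main obstacle is to justify that this one-variable-at-a-time limit actually computes the simultaneous limit written in the corollary. For this I would invoke Lemma~\ref{lemma: G correlators are rational functions}: the full correlator is a rational function, and its only singularity as $z_{2i-1}\to z_{2i}$ is a pole of order at most three, again because the $GG$ OPE has nothing more singular than $(z-w)^{-3}$ in any single pair. Consequently $\prod_{i=1}^{k}(z_{2i-1}-z_{2i})^3\,\langle G(z_1)\cdots G(z_{2n})\rangle$ is regular along each diagonal $z_{2i-1}=z_{2i}$ for generic distinct values of the remaining variables, so the limit along the diagonals is path-independent and coincides with the iterated value; equality of the resulting rational functions in $z_{2k+1},\dots,z_{2n}$ then holds identically.

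As an independent confirmation I would re-derive the same identity directly from the cluster-decomposition formula \eqref{e1}. The key is a pole count: inside any cluster $I_i$ of size $2M_i\ge 4$, a fixed pair of cyclically adjacent labels occupies exactly one denominator factor of $\hat g_{I_i}$, so $\hat g_{I_i}$ has a pole of order at most two in that pair's difference and cannot survive multiplication by $(z_{2i-1}-z_{2i})^3$; only the two-element cluster $\{2i-1,2i\}$, for which $\hat g_{\{2i-1,2i\}}=\frac{1}{(z_{2i-1}-z_{2i})^3}$, supplies the required order-three pole. Hence the surviving decompositions are precisely those in which every coincident pair is its own cluster, each contributing $\frac{2c}{3}$, while the remaining clusters recombine into $\langle G(z_{2k+1})\cdots G(z_{2n})\rangle$; the signature factor factorizes because the first $2k$ labels map to themselves in increasing order, so that block of the permutation is the identity.
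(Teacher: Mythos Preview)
Your proposal is correct. The second argument you give---the pole count in the cluster decomposition \eqref{e1}, showing that only the two-element clusters $\{2i-1,2i\}$ supply third-order poles and hence survive the multiplication by $(z_{2i-1}-z_{2i})^3$---is exactly the paper's own proof of this corollary.

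Your first argument, via the $GG$ OPE and induction on $k$, is a genuinely different route. It is cleaner in that it bypasses the combinatorial formula \eqref{e1} entirely and works directly from the algebra \eqref{SUSY}: the key observation that the $GG$ OPE contains no second-order pole makes the extraction of the coefficient $\frac{2c}{3}$ immediate. The care you take in reconciling the simultaneous limit with the iterated one---invoking Lemma~\ref{lemma: G correlators are rational functions} and the fact that all denominator factors of the correlator are of diagonal type $(z_i-z_j)$---is a point the paper leaves implicit. On the other hand, the paper's cluster-decomposition argument makes the combinatorics of the surviving terms manifest, which is what is actually needed for the subsequent identification of the polynomials $Q^s_{2n}$ and their clustering recursion; your OPE argument reaches the identity more quickly but gives less direct access to that structure.
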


\begin{proof}
If $z_{2i-1}$ and $z_{2i}$ are not in the same cluster, then Eq (\ref{clust}) will vanish when we take the limit $z_{2i-1}\rightarrow z_{2i}$. If $z_{2i-1}$ and $z_{2i}$ are in the same cluster $I_{i}$, and the length of $I_{i}$ is more than 2 (i.e, at least 4 vertices in this group), the the terms related to $(z_{2i-1}-z_{2i})$ are either proportional to $\frac{1}{z_{2i-1}-z_{2i}}$ or $\quad\frac{1}{(z_{2i-1}-z_{2i})^{2}}$,
which also vanishes in the limit when $z_{2i-1}\rightarrow z_{2i}$. The only non zero contribution is if there exist a $I_{i}=(z_{2i-1},z_{2i})$, in this case:
\begin{equation}
    (z_{2i-1}-z_{2i})^{3}g_{I_{i}}(z_{2i-1},z_{2i})=(z_{2i-1}-z_{2i})^{3}\frac{\frac{2c}{3}}{(z_{2i-1}-z_{2i})^{3}}=\frac{2c}{3},
\end{equation}
 Therefore, the only non zero term in this limit is the following cluster decomposition:
\begin{equation}
    I_{1}\cup I_{2}\cup\cdots\cup I_{k}\cup I_{k+1}\cup\cdots\cup I_{s},
\end{equation}
where for $1\leq i\leq k$, $I_{i}=(z_{\sigma(2i-1)},z_{\sigma(2i)})$, $\sigma\in S_{2k}$. Then the conlcusion follows from the definition of $\hat{g}_I$.
\end{proof}

\begin{corollary}
Let 
\begin{equation}
\phi(z_{1},\cdots,z_{2n})=\mathop{\prod}\limits_{1\leq i<j\leq 2n}(z_{i}-z_{j})^{3}\la G(z_{1})\cdots G(z_{2n})\ra\label{e2},
\end{equation}
then \begin{equation}
\phi=\mathop{\sum}\limits_{s=1}^{n}\left(\frac{2c}{3}\right)^{s}Q_{2n}^{s}(z_{1},\cdots,z_{2n})\label{def},
\end{equation}
where $Q^s_{2n}(z_1,\cdots,z_{2n})$ are symmetric polynomials determined by
\begin{equation}
\begin{aligned}
    Q_{2n}^{1}&=\frac{1}{2n}\mathop{\sum}\limits_{\sigma\in S_{2n}}\frac{\mathop{\prod}\limits_{1\leq i<j\leq 2n}(z_{\sigma(i)}-z_{\sigma(j)})^{3}}{(z_{\sigma(1)}-z_{\sigma(2)})(z_{\sigma(2)}-z_{\sigma(3)})^{2}\cdots(z_{\sigma(2n-1)}-z_{\sigma(2n)})(z_{\sigma(2n)}-z_{\sigma(1)})^{2}}\label{Q1},\\
\end{aligned}
\end{equation}
and 
\begin{equation}
        Q_{2n}^{s}=\mathop{\mathop{\sum}\limits_{I_{1}\cup I_{2}\cdots\cup I_{s}=(1 2\cdots 2n)}}\limits_{\text{cluster\,decomposition}, |I_{i}|=2m_{i}}\mathop{\prod}\limits_{i=1}^{s} Q_{2m_{i}}^{1}\mathop{\prod}\limits_{\substack{j\in I_{r},k\in I_{t}\\r<t}}(z_{j}-z_{k})^{3}.
        \label{R}
\end{equation}
\end{corollary}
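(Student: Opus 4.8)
The plan is to feed the cluster-decomposition form of the correlator, Eq.~(\ref{e1}), into the definition~(\ref{e2}) of $\phi$ and show that every sign disappears. Write $V\equiv\prod_{1\le i<j\le 2n}(z_i-z_j)$ for the Vandermonde factor, so that $\phi=V^3\,\la G(z_1)\cdots G(z_{2n})\ra$. Substituting~(\ref{e1}), the sum runs over cluster decompositions $I_1\cup\cdots\cup I_s=(12\cdots 2n)$, each contributing $(2c/3)^s$ times the signature factor times $\prod_i\hat g_{I_i}$. The goal is to distribute $V^3$ across the clusters, absorb it into the $\hat g_{I_i}$, and match the definition~(\ref{R}).

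The single-cluster identity is the heart of the matter. For a cluster $I$ of size $2m$, relabelled as $\{1,\dots,2m\}$ without loss of generality, I would compute $V(I)^3\hat g_I$, where $V(I)=\prod_{j<k\in I}(z_j-z_k)$. Using the elementary fact $\prod_{i<j}(z_{\sigma(i)}-z_{\sigma(j)})=\mathrm{sign}(\sigma)\prod_{i<j}(z_i-z_j)$, together with the crucial observation that the exponent $3$ is odd so that $\mathrm{sign}(\sigma)^3=\mathrm{sign}(\sigma)$, each $\mathrm{sign}(\sigma)$ appearing in $\hat g_I$ is cancelled by the $\mathrm{sign}(\sigma)$ produced when the cubed Vandermonde is pulled through the relabelling $\sigma$. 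This yields exactly
\begin{equation}
V(I)^3\,\hat g_{I}=Q^{1}_{2m},\nonumber
\end{equation}
the expression in~(\ref{Q1}). The same computation shows $Q^1_{2m}$ is a genuine polynomial: the denominators in $\hat g_I$ carry each difference $(z_j-z_k)$ to power at most $2$, while $V(I)^3$ supplies power $3$ for every pair, so no pole survives.

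Next I would handle the between-cluster signs. Factor the global Vandermonde as $V=\mathrm{sign}(\pi)\prod_i V(I_i)\prod_{\substack{j\in I_r,k\in I_t\\ r<t}}(z_j-z_k)$, where $\pi$ is the permutation sending $(1,2,\dots,2n)$ to the concatenation $(I_1,I_2,\dots,I_s)$ with each block written in increasing order; cubing preserves this sign since $\mathrm{sign}(\pi)^3=\mathrm{sign}(\pi)$. By construction $\mathrm{sign}(\pi)$ is precisely the signature factor $\mathrm{sign}\begin{pmatrix}1&\cdots&2n\\ I_1&\cdots&I_s\end{pmatrix}$ recorded in~(\ref{e1}); note also that since each block has even size, the decomposition is independent of the chosen ordering of blocks. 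Hence, term by term in the cluster sum, the explicit sign of~(\ref{e1}) multiplies the block-reordering sign to give $\mathrm{sign}(\pi)^2=1$, the factors $V(I_i)^3$ combine with $\hat g_{I_i}$ into $Q^1_{2m_i}$ by the single-cluster identity, and only the inter-cluster Jastrow factor $\prod_{r<t}(z_j-z_k)^3$ remains.

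Finally I would regroup the resulting sum by the number of clusters $s$: collecting the coefficient of $(2c/3)^s$ reproduces the sum over $s$-block cluster decompositions of $\prod_i Q^1_{2m_i}\prod_{r<t}(z_j-z_k)^3$, which is the definition~(\ref{R}) of $Q^s_{2n}$, establishing~(\ref{def}). Polynomiality of each $Q^s_{2n}$ follows from that of $Q^1$ and the integer-power Jastrow factors, and symmetry follows because $\phi$ is symmetric—the antisymmetry of $\la G\cdots G\ra$ under exchange of the spin-$\tfrac32$ fields is cancelled by the antisymmetry of $V^3$—while the powers $(2c/3)^s$ are linearly independent as functions of $c$, so each coefficient $Q^s_{2n}$ separately inherits the symmetry. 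The one step demanding genuine care is the bookkeeping of the two independent sign cancellations, within each cluster and between clusters; both hinge on the oddness of the exponent $3$, and everything else is a reorganization of the sum already established in~(\ref{e1}).
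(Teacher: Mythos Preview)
Your proposal is correct and follows essentially the same route as the paper: both start from the cluster form~(\ref{e1}), use the identity $\prod_{i<j}(z_{\sigma(i)}-z_{\sigma(j)})^3=\mathrm{sign}(\sigma)\prod_{i<j}(z_i-z_j)^3$ to absorb the intra-cluster signs into $Q^1_{2m}$, and then factor the global Vandermonde cube into within-cluster and between-cluster pieces so that the block-reordering sign cancels the explicit signature $\mathrm{sign}\begin{pmatrix}1&\cdots&2n\\ I_1&\cdots&I_s\end{pmatrix}$. Your additional remarks on polynomiality, symmetry via linear independence of the powers $(2c/3)^s$, and the irrelevance of block ordering (each block having even size) are correct refinements the paper leaves implicit.
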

\noindent 
\begin{proof}
   
Following Eq (\ref{e1}), Eq (\ref{e2}) and Eq (\ref{def}), we have:
\begin{equation}
    \mathop{\sum}\limits_{s=1}^{n}\left(\frac{2c}{3}\right)^{s}Q_{2n}^{s}(z_{1},\cdots,z_{2n})=\mathop{\prod}\limits_{1\leq i<j\leq 2n}(z_{i}-z_{j})^{3}\mathop{\sum \limits_{I_{1}\cup I_{2}\cdots\cup I_{s}=(1 2 \cdots 2n)}}\limits_{\text{cluster\,decomposition}}\left(\frac{2c}{3}\right)^{s}\times \mathrm{sign}\begin{pmatrix}
    1&2&\cdots&2n\\
    I_{1}&I_{2}&\cdots&I_{s}\\
\end{pmatrix}
\mathop{\prod}\limits_{i=1}^{s}\hat{g}_{I_{i}}.
\end{equation}
Comparing the polynomial expansion of $\phi$ and G gives Eq (\ref{Q1}) and Eq (\ref{R}). \\For s=1, it gives:
\begin{equation}
\begin{aligned}
    Q_{2n}^{1}&=\mathop{\prod}\limits_{1\leq i<j\leq 2n}(z_{i}-z_{j})^{3}\hat{g}_{I_{1}}\\
    &=\frac{1}{2n}\mathop{\sum}\limits_{\sigma\in S_{2n}}\frac{\mathrm{sign}(\sigma)\mathop{\prod}\limits_{1\leq i<j\leq 2n}(z_{i}-z_{j})^{3}}{(z_{\sigma(i_{1})}-z_{\sigma(i_{2})})(z_{\sigma(i_{2})}-z_{\sigma(i_{3})})^{2}\cdots(z_{\sigma(i_{2n-1})}-z_{\sigma(i_{2n})})(z_{\sigma(i_{2n})}-z_{\sigma(i_{1})})^{2}}\\
    &=\frac{1}{2n}\mathop{\sum}\limits_{\sigma\in S_{2n}}\frac{\mathop{\prod}\limits_{1\leq i<j\leq 2n}(z_{\sigma(i)}-z_{\sigma(j)})^{3}}{(z_{\sigma(i_{1})}-z_{\sigma(i_{2})})(z_{\sigma(i_{2})}-z_{\sigma(i_{3})})^{2}\cdots(z_{\sigma(i_{2n-1})}-z_{\sigma(i_{2n})})(z_{\sigma(i_{2n})}-z_{\sigma(i_{1})})^{2}},\\
\end{aligned}
\end{equation}where we use the identity:
\begin{equation}
    \mathop{\prod}\limits_{1\leq i<j\leq 2n}(z_{\sigma(i)}-z_{\sigma(j)})^{3}=\text{sign}(\sigma)\mathop{\prod}\limits_{1\leq i<j\leq 2n}(z_{i}-z_{j})^{3}.
\end{equation}
For $2\leq s\leq n$, we have:
\begin{equation}
\begin{aligned}
    Q_{2n}^{s}&=\mathop{\prod}\limits_{1\leq i<j\leq 2n}(z_{i}-z_{j})^{3}\mathop{\mathop{\sum}\limits_{I_{1}\cup I_{2}\cdots\cup I_{s}=(1 2\cdots 2n)}}\limits_{\text{cluster\,decomposition}, |I_{i}|=2m_{i}}\mathrm{sign}\begin{pmatrix}
    1&2&\cdots&2n\\
    I_{1}&I_{2}&\cdots&I_{s}\\
\end{pmatrix}
\mathop{\prod}\limits_{i=1}^{s}\hat{g}_{I_{i}}\\
&=\mathop{\mathop{\sum}\limits_{I_{1}\cup I_{2}\cdots\cup I_{s}=(1 2\cdots 2n)}}\limits_{\text{cluster\,decomposition}, |I_{i}|=2m_{i}}\mathop{\prod}\limits_{i=1}^{s} Q_{2m_{i}}^{1}\mathop{\prod}\limits_{\substack{j\in I_{r},k\in I_{t}\\ 1\leq r<t\leq s}}(z_{j}-z_{k})^{3},
\end{aligned}
\end{equation}

where the following identity is used:
\begin{equation}
\begin{aligned}
  \mathop{\prod}\limits_{1\leq i<j\leq 2n}(z_{i}-z_{j})^{3}&= \mathop{\mathop{\prod}\limits_{ a_{i}<b_{i}z_{a_{i}},z_{b_{i}}\in I_{i}}}\limits_{1\leq i\leq s}(z_{a_{i}}-z_{b_{i}})^{3}\mathop{\prod}\limits_{\substack{j\in I_{r},k\in I_{t}\\j<k}}(z_{j}-z_{k})^{3}\\
  &= \mathop{\mathop{\prod}\limits_{ a_{i}<b_{i}z_{a_{i}},z_{b_{i}}\in I_{i}}}\limits_{1\leq i\leq s}(z_{a_{i}}-z_{b_{i}})^{3}\mathrm{sign}\begin{pmatrix}
    1&2&\cdots&2n\\
    I_{1}&I_{2}&\cdots&I_{s}\\
\end{pmatrix}\mathop{\prod}\limits_{\substack{j\in I_{r},k\in I_{t}\\1\leq r<t\leq s}}(z_{j}-z_{k})^{3}
  \end{aligned}
\end{equation}
and 
\begin{equation}
   Q_{2m_{i}}^{1}=\mathop{\prod}\limits_{ a_{i}<b_{i},a_{i},b_{i}\in I_{i}}(z_{a_{i}}-z_{b_{i}})^{3}\hat{g}_{I_{i}} 
\end{equation}
\end{proof}
 It is easy to deduce from Corollary \ref{cor: clustering} and Eq (\ref{e2}) that
\begin{equation}\label{eqn: clustering}
    \begin{split}
        &\phi(z_1=z_2=Z_1,\cdots,z_{2l-1}=z_{2l}=Z_{l},\cdots)\\
        &=\left(\frac{2c}{3}\right)^{l}\prod_{1\leq j<k\leq l}(Z_j-Z_k)^{12}\prod_{j=1}^{l}\prod_{k=2l+1}^{2n}(Z_j-z_k)^6 \phi(z_{2l+1},\cdots,z_{2n}),
    \end{split}
\end{equation}
then do similar comparasion of polynomial coeeficient of  Eq (\ref{def}) shows that $Q^s_{2n}$ satisfy the following clustering properties:
\begin{equation}
    \begin{split}
        &Q^s_{2n}(z_1=z_2=Z_1,\cdots,z_{2l-1}=z_{2l}=Z_{l},\cdots)\\
&=\prod_{1\leq j<k\leq l}(Z_j-Z_k)^{12}\prod_{j=1}^{l}\prod_{k=2l+1}^{2n}(Z_j-z_k)^6 Q^{s-l}_{2n-2l}(z_{2l+1},\cdots,z_{2n}).
    \end{split}
    \label{Q}
\end{equation}
Following \cite{Estienne_2010, Moore:1991ks}, the clustering properties of wave function $Q_{2n}^{s}$ is a reflection of $\mathbb{Z}^{s}$ symmetry.
Note that Eq (\ref{Q}) is invariant under permutations, namely:
    \begin{equation}
    \begin{split}
        &Q^s_{2n}(z_1=z_2=Z_1,\cdots,z_{2l-1}=z_{2l}=Z_{l},z_{2l+1},...,z_{2n})\\
&=Q^s_{2n}(z_{\sigma(1)}=z_{\sigma(2)}=Z_{1},\cdots,z_{\sigma(2l-1)}=z_{\sigma(2l)}=Z_{l}, z_{\sigma(2l+1)},...,z_{\sigma(2n)}).
    \end{split}
\end{equation}
which corresponds to the clustering decomposition: 
\begin{equation}
    I_{1}\cup I_{2}\cup\cdots\cup I_{l}\cup I_{l+1}\cup\cdots\cup I_{s},
\end{equation}
where for $1\leq j\leq l$, $I_{j}=(z_{\sigma(2j-1)}=Z_{j},z_{\sigma(2j)}=Z_{j})$, $\sigma\in S_{l}$. For $1\leq s\leq n-1$, one can prove that $l\leq s-1$ must hold by using proof by contradiction. If $l=s$, then the clustering decomposition that make $Q_{2n}^{s}$ non zero is:
\begin{equation}
    I_{1}\cup I_{2}\cup\cdots\cup I_{s}=(1,2,...2n),
\end{equation}
the length of each $I_{i}$ is 2, however, since $s\leq n-1$, the total length of this clustering decomposition is $2s\leq2n-2<2n$, therefore for $2\leq s\leq n-1$, $1\leq l\leq s-1$. For $s=n$, the clustering decomposition that make $Q_{2n}^{n}$ non zero is:
\begin{equation}
    I_{1}\cup I_{2}\cup\cdots\cup I_{n}=(1,2,...,2n),
\end{equation}
the length of each $I_{i}$ is 2, so in this case, $1\leq l\leq n$. 
Mathematically, each $Q^s_{2n}(z_1,\cdots, z_{2n})$ is a symmetric polynomial with $2n$ coordinates such that whenever $k+1=3$ particles coincide, the polynomial vanishes as power $r=6$, which is a direct result from Eq (\ref{Q}). Individually, for $1\leq s\leq n-1$, $Q^s_{2n}(z_1,\cdots, z_{2n})$ vanishes whenever $s$ clusters of two particles coincide but not vanishing if only $s-1$ clusters of two particles coincide as shown in Eq.(\ref{Q}). For $s=n$, $Q_{2n}^{n}$ will be non vanishing even for $n$ clusters of two particles coincide.

Below we give some simple examples of these symmetric polynomials:
\begin{example}
Let $z_{ij}=z_i-z_j$, then
\begin{equation}
\begin{aligned}
&Q_{2}^{1}=1,\\
&Q_{4}^{1}=3z_{12}^{2}z_{13}^{2}z_{14}^{2}z_{23}^{2}z_{24}^{2}z_{34}^{2},\\
&Q_{6}^{2}=3\left(\mathop{\prod}\limits_{a<b}^{6}z_{ab}^{2}\right)\left(\mathop{\sum}\limits_{i<j}^{6}\frac{1}{z_{ij}^{2}}\mathop{\prod}\limits_{k\neq i,j}^{6}z_{ik}z_{jk}\right).
\end{aligned}
\end{equation}
\end{example}
\begin{corollary}
$Q_{4}^{1}=3z_{12}^{2}z_{13}^{2}z_{14}^{2}z_{23}^{2}z_{24}^{2}z_{34}^{2}.$
\end{corollary}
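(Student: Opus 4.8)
The plan is to specialize the explicit single-cluster formula~(\ref{Q1}) to $n=2$. Setting $2n=4$ there gives
\[
Q_4^1=\prod_{1\le i<j\le 4}(z_i-z_j)^3\,\hat g_{(1234)},\qquad \hat g_{(1234)}=\frac{1}{4}\sum_{\sigma\in S_4}\frac{\mathrm{sign}(\sigma)}{(z_{\sigma(1)}-z_{\sigma(2)})(z_{\sigma(2)}-z_{\sigma(3)})^{2}(z_{\sigma(3)}-z_{\sigma(4)})(z_{\sigma(4)}-z_{\sigma(1)})^{2}}.
\]
Since $\hat g_{(1234)}$ is homogeneous of degree $-6$ and the Jastrow prefactor $\prod_{i<j}(z_i-z_j)^3$ has degree $18$, the object $Q_4^1$ is homogeneous of degree $12$; by the general discussion it is moreover a genuine \emph{symmetric} polynomial, the Jastrow factor cancelling every pole of $\hat g_{(1234)}$. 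So the task reduces to pinning down a degree-$12$ symmetric polynomial, and I would first determine its shape before fixing its normalization.

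The key structural input is the vanishing encoded in the clustering identity~(\ref{Q}). Applying it with $s=1$, $n=2$, $l=1$, the right-hand side carries a factor $Q^{0}_{2}$, which is zero because the sum defining $\phi$ in~(\ref{def}) runs only over $1\le s\le n$. Hence $Q_4^1(z_1=z_2,z_3,z_4)=0$, and by symmetry $Q_4^1$ vanishes on every diagonal $\{z_i=z_j\}$. A symmetric polynomial vanishing on all diagonals is divisible by the Vandermonde $V=\prod_{i<j}(z_i-z_j)$; writing $Q_4^1=V\,R$, the antisymmetry of $V$ forces $R$ to be antisymmetric, hence again divisible by $V$, so that $Q_4^1=V^2 S=\big(\prod_{i<j}z_{ij}^2\big)\,S$ with $S$ symmetric. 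Comparing degrees ($12=12$) forces $S$ to be a constant, giving $Q_4^1=C\prod_{i<j}z_{ij}^2$ for some scalar $C$.

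It remains to determine $C=3$. The cleanest route is to match simple-pole residues at a single coincidence $z_1\to z_2$: the relation $Q_4^1=C\,V^2$ is equivalent to $\hat g_{(1234)}=C/V$, whose residue at $z_1=z_2$ is $C/\big[(z_2-z_3)^2(z_2-z_4)^2(z_3-z_4)\big]$. Equivalently, one reads off $Q_4^1$ as the part linear in $c$ of $\prod_{i<j}(z_i-z_j)^3\la G(z_1)\cdots G(z_4)\ra$, evaluated from the super-Virasoro Ward identities for $\la T(z_i)\,G\cdots G\ra$; only the term singular as $z_1\to z_2$ survives the residue, and after substituting $z_3-z_4=(z_2-z_4)-(z_2-z_3)$ it collapses to a short rational identity that fixes $C=3$. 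I expect this last residue/arithmetic step to be the only genuine computation and hence the main (albeit minor) obstacle, the preceding symmetry argument having done all the structural work; the outcome $Q_4^1=3z_{12}^2z_{13}^2z_{14}^2z_{23}^2z_{24}^2z_{34}^2$ then reproduces the value quoted in the Example.
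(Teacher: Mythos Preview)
Your proposal is correct but takes a genuinely different route from the paper. The paper proceeds by direct computation: it rewrites~(\ref{Q1}) for $2n=4$ as $Q_4^1=\frac{1}{4}\prod_{i<j}z_{ij}^{\,2}\sum_{\sigma\in S_4}\frac{z_{\sigma(1)\sigma(3)}z_{\sigma(2)\sigma(4)}}{z_{\sigma(1)\sigma(4)}z_{\sigma(2)\sigma(3)}}$, recognizes the summand as a cross-ratio, and uses the classical fact that under $S_4$ a cross-ratio $x$ takes exactly the six values $x,\,1-x,\,1/x,\,1-1/x,\,1/(1-x),\,x/(x-1)$, each with a stabilizer of order four. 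Pairing these as $x+(1-x)=1$, $1/x+(1-1/x)=1$, $1/(1-x)+x/(x-1)=1$ gives $\sum_\sigma(\cdots)=4\cdot 3=12$, whence $Q_4^1=3\prod z_{ij}^{\,2}$. Your argument instead invokes the clustering identity~(\ref{Q}) to force $Q_4^1$ to vanish on every diagonal, then the double-Vandermonde divisibility plus degree counting to pin its shape to $C\prod_{i<j}z_{ij}^{\,2}$, leaving only the constant $C$. This is more conceptual and explains \emph{why} the answer must take that form, and the mechanism would scale to larger $n$; the price is that your determination of $C$ via a residue/Ward-identity computation is sketched rather than carried out, whereas the paper's cross-ratio trick delivers the constant $3$ in the same stroke. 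If you want to close your argument with minimal effort, it is quicker to fix $C$ by evaluating $\hat g_{(1234)}$ at one generic numerical point than to go through the super-Virasoro Ward identities you outline.
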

\begin{proof}
    Following Eq(\ref{Q1}),we have:
    \begin{equation}
        Q_{4}^{1}=\frac{1}{4}\mathop{\prod}\limits_{i<j}^{4}z_{ij}^{2}\mathop{\sum}\limits_{\sigma}\frac{z_{\sigma(1)\sigma(3)}z_{\sigma(2)\sigma(4)}}{z_{\sigma(1)\sigma(4)}z_{\sigma(2)\sigma(3)}}\label{Q4}
    \end{equation}
For the cross ratio, take $x=\frac{z_{13}z_{24}}{z_{14}z_{23}}$ as an example, one can prove that the cross ratio over the $4!=24$ permutations have 6 distinct value:
\begin{equation}
    x,\quad1-x,\quad\frac{1}{x},\quad1-\frac{1}{x},\quad\frac{1}{1-x},\quad\frac{x}{x-1}
\end{equation}
The stabilizer for the cross ratio $\frac{z_{\sigma(1)\sigma(3)}z_{\sigma(2)\sigma(4)}}{z_{\sigma(1)\sigma(4)}z_{\sigma(2)\sigma(3)}}$ is:
\begin{equation}
\begin{aligned}
     &(\sigma(1)\sigma(2)\sigma(3)\sigma(4)),\quad(\sigma(2)\sigma(1)\sigma(4)\sigma(3))\\
     &(\sigma(4)\sigma(3)\sigma(2)\sigma(1)),\quad(\sigma(3)\sigma(4)\sigma(1)\sigma(2))\\
\end{aligned}
\end{equation}
Therefore:
\begin{equation}
    \begin{aligned}
        &\mathop{\sum}\limits_{\sigma}\frac{z_{\sigma(1)\sigma(3)}z_{\sigma(2)\sigma(4)}}{z_{\sigma(1)\sigma(4)}z_{\sigma(2)\sigma(3)}}\\
        &=4(x+1-x+\frac{1}{x}+1-\frac{1}{x}+\frac{1}{1-x}+\frac{x}{x-1})\\
        &=12,
    \end{aligned}
\end{equation}
which implies that:
\begin{equation}
Q_{4}^{1}=3z_{12}^{2}z_{13}^{2}z_{14}^{2}z_{23}^{2}z_{24}^{2}z_{34}^{2}.
\end{equation}
\end{proof}
\begin{corollary}
    $Q_{2n}^{n}=\mathrm{Pf}\left(z_{ij}^{-3}\right)\mathop{\prod}\limits_{i<j}^{2n}z_{ij}^{3}.\label{Q2n}$
\end{corollary}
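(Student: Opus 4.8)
The plan is to specialize the cluster-decomposition formula~(\ref{R}) to $s=n$ and then match the resulting sum term by term against the standard expansion of the Pfaffian.

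First I would note that when $s=n$ the constraints $|I_i|=2m_i\ge 2$ together with $\sum_{i=1}^n m_i=n$ force $m_i=1$ for every $i$; hence each cluster has exactly two elements, and the cluster decompositions occurring in~(\ref{R}) are exactly the perfect matchings of $\{1,\dots,2n\}$ into $n$ pairs. Since $Q_2^1=1$, every factor $Q_{2m_i}^1$ is trivial and~(\ref{R}) collapses to
\[
Q_{2n}^{n}=\mathop{\sum}\limits_{\substack{I_1\cup\cdots\cup I_n=(1\cdots 2n)\\|I_i|=2}}\ \mathop{\prod}\limits_{\substack{j\in I_r,\,k\in I_t\\ r<t}}(z_j-z_k)^3 ,
\]
where the clusters are listed in their canonical increasing order.

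Next I would expand the Pfaffian over matchings. Writing a matching as $M=\{(a_1,b_1),\dots,(a_n,b_n)\}$ with $a_i<b_i$ and $a_1<\cdots<a_n$, one has $\mathrm{Pf}(z_{ij}^{-3})=\sum_M \mathrm{sign}(M)\prod_{i=1}^n z_{a_ib_i}^{-3}$, where $\mathrm{sign}(M)$ is the signature of the permutation sending $(1,2,\dots,2n)$ to $(a_1,b_1,\dots,a_n,b_n)$. Multiplying by $\prod_{i<j}z_{ij}^3$ and splitting this Vandermonde cube into its within-pair factors $\prod_i z_{a_ib_i}^3$ and its across-pair factors cancels the $z_{a_ib_i}^{-3}$, leaving
\[
\mathrm{Pf}(z_{ij}^{-3})\mathop{\prod}\limits_{i<j}^{2n}z_{ij}^3=\mathop{\sum}\limits_{M}\mathrm{sign}(M)\mathop{\prod}\limits_{\substack{p<q\\ p,q\ \text{in different pairs}}}z_{pq}^3 .
\]

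Finally I would reconcile the two expressions using the sign identity already established in the proof of~(\ref{R}): the numerically ordered across-pair product equals $\mathrm{sign}\!\left(\begin{smallmatrix}1&\cdots&2n\\ I_1&\cdots&I_n\end{smallmatrix}\right)$ times the cluster-ordered product $\prod_{r<t}(z_j-z_k)^3$. Because the Pfaffian matching sign $\mathrm{sign}(M)$ is, by definition, this very same permutation signature, the two signs multiply to $+1$, and the displayed Pfaffian expression reduces term by term to the expression for $Q_{2n}^{n}$ above. The one delicate point—and the main obstacle—is precisely this sign bookkeeping: one must verify that the signature produced by reordering the across-pair product into canonical position coincides with the Pfaffian's own matching signature. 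Since both are the signature of the permutation $(1,\dots,2n)\mapsto(a_1,b_1,\dots,a_n,b_n)$, invoking the identity from~(\ref{R}) closes the argument.
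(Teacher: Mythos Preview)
Your proposal is correct and follows essentially the same route as the paper: specialize~(\ref{R}) to $s=n$ so that every cluster is a pair, use $Q_2^1=1$, and identify the resulting sum over perfect matchings with the Pfaffian times the Vandermonde cube. The only cosmetic difference is that the paper expands the Pfaffian as the full $S_{2n}$ sum with prefactor $1/(2^n n!)$ and then identifies the order-$2^n n!$ stabilizer subgroup that collapses it to a sum over matchings, whereas you invoke the matching expansion of the Pfaffian directly; your explicit sign check via the identity from the proof of~(\ref{R}) is in fact a bit cleaner than the paper's transition in~(\ref{clu}).
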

\begin{proof}
    For a skew-symmetric matrix A, Pfaffian of A satisfies $\mathrm{Pf}(A)^{2}=det(A)$. Let $A=(a_{ij})$ be a $2n \times 2n$ skew-symmetric matrix, the explicit expression of its Pfaffian is:
    \begin{equation}
        \mathrm{Pf}(A)=\frac{1}{2^{n}n!}\mathop{\sum}\limits_{\sigma\in S_{2n}}\text{sign}(\sigma)\mathop{\prod}\limits_{i=1}^{n}a_{\sigma(2n-1),\sigma(2n)}.
    \end{equation}
   which leads to the expression of  $\mathrm{Pf}\left(z_{ij}^{-3}\right)$:
    \begin{equation}
        \mathrm{Pf}\left(z_{ij}^{-3}\right)=\frac{1}{2^{n}n!}\mathop{\sum}\limits_{\sigma\in S_{2n}}\text{sign}(\sigma)\mathop{\prod}\limits_{i=1}^{n}z_{\sigma(2i-1)\sigma(2i)}^{-3}.
        \label{Pf}
    \end{equation}
Following Eq (\ref{R}), for $Q_{2n}^{n}$, the cluster decomposition is given by:
\begin{equation}
    I_{1}\cup I_{2}\cup\cdots\cup I_{n}=(1,2,...,2n)
\end{equation}
Since the length of each $I_{i}$ is at least 2 and the total length is $2n$. We have for $1\leq i\leq n$, $|I_{i}|=2$. Note that $Q_{2}^{1}=1$, by Eq (\ref{R}) we have:
\begin{equation}
\begin{aligned}
    Q_{2n}^{n}&=\mathop{\mathop{\sum}\limits_{I_{1}\cup I_{2}\cdots\cup I_{n}=(1 2\cdots 2n)}}\limits_{\text{cluster\,decomposition}, |I_{i}|=2}\mathop{\prod}\limits_{i=1}^{n} Q_{I_{i}}^{1}\mathop{\prod}\limits_{\substack{j\in I_{r},k\in I_{t}\\r<t}}(z_{j}-z_{k})^{3}\\
    &=\mathop{\mathop{\sum}\limits_{I_{1}\cup I_{2}\cdots\cup I_{n}=(1 2\cdots 2n)}}\limits_{\text{cluster\,decomposition}, |I_{i}|=2}\mathop{\prod}\limits_{\substack{j\in I_{r},k\in I_{t}\\r<t}}(z_{j}-z_{k})^{3}.\\
    &=\mathop{\mathop{\sum}\limits_{I_{1}\cup I_{2}\cdots\cup I_{n}=(1 2\cdots 2n)}}\limits_{\text{cluster\,decomposition}, |I_{i}|=2}\mathop{\prod}\limits_{i<j}^{2n}z_{ij}^{3}\mathop{\prod}\limits_{\substack{l,k\in I_{r}\\r=1,2,...n}}(z_{lk})^{-3}\\
    &=\left(\mathop{\mathop{\sum}\limits_{I_{1}\cup I_{2}\cdots\cup I_{n}=(1 2\cdots 2n)}}\limits_{\text{cluster\,decomposition}, |I_{i}|=2}\mathop{\prod}\limits_{\substack{l,k\in I_{r}\\r=1,2,...n}}(z_{lk})^{-3}\right)\mathop{\prod}\limits_{i<j}^{2n}z_{ij}^{3}\\
    \end{aligned}
    \label{clu}
\end{equation}
Following Eq (\ref{Pf}), there is a subgroup of $S_{2n}$ which leaves the expression $\text{sign}(\sigma)\mathop{\prod}\limits_{i=1}^{n}z_{\sigma(2i-1)\sigma(2i)}^{-3}$ invariant. For example, if we take the cluster decomposition to be:
\begin{equation}
    I_{i}=(z_{2i-1},z_{2i})\quad i=1,2,...,n
\end{equation}
The following equation holds for a subgroup of $S_{2n}$:
\begin{equation}
    \text{sign}(\sigma)\mathop{\prod}\limits_{i=1}^{n}z_{\sigma(2i-1)\sigma(2i)}^{-3}=\mathop{\prod}\limits_{i=1}^{n}z_{(2i-1)(2i)}^{-3},
\end{equation}
where the generators of this subgroup are given by:
\begin{equation}
\begin{aligned}
    &a_{i}=\sigma(z_{2i-1},z_{2i})=(z_{2i},z_{2i-1})\quad i=1,2,...,n\\
    &b_{j}=\sigma(z_{1},z_{2},z_{2j-1},z_{2j})=(z_{2j-1},z_{2j},...,z_{1},z_{2})\quad j=1,2,...,n-1,\\
\end{aligned}
\end{equation}
One can check that the order of this subgroup is $2^{n}n!$. Therefore:
\begin{equation}
\begin{aligned}
    &\mathrm{Pf}\left(z_{ij}^{-3}\right)=\frac{1}{2^{n}n!}\mathop{\sum}\limits_{\sigma\in S_{2n}}\text{sign}(\sigma)\mathop{\prod}\limits_{i=1}^{n}z_{\sigma(2i-1)\sigma(2i)}^{-3}\\
    &=\left(\mathop{\mathop{\sum}\limits_{I_{1}\cup I_{2}\cdots\cup I_{n}=(1 2\cdots 2n)}}\limits_{\text{cluster\,decomposition}, |I_{i}|=2}\mathop{\prod}\limits_{\substack{l,k\in I_{r}\\r=1,2,...n}}(z_{lk})^{-3}\right).\\
\end{aligned}
\end{equation}
thus the Corollary holds by Eq (\ref{clu}).
\end{proof}
There is an equivalent representation of the ground state wave function if we expand it with another set of symmetric polynomials with same clustering properties. A nice feature for choosing this new set of symmetric polynomials is that it makes direct connections with our limiting case when $m\rightarrow \infty$ and $c\rightarrow \frac{3}{2}$. One can check that, for the following symmmetric polynomials:
\begin{equation}
    P_{2n}^{s}(z_1,\cdots,z_{2n})=\frac{1}{3}\mathop{\sum}\limits_{j=1}^{s}Q_{2n}^{j}(z_1,\cdots,z_{2n})\quad s=1,2,...n-1,
\end{equation}
\begin{equation}
    P_{2n}^{n}(z_1,\cdots,z_{2n})=\frac{1}{3}\mathrm{Pf}^{3}\left(z_{ij}^{-1}\right)\mathop{\prod}\limits_{i<j}^{2n}z_{ij}^{3}.
\end{equation}
We have:
\begin{corollary}
\begin{align}
    \phi(z_1,\cdots, z_{2n})=\left(\frac{2c}{3}\right)^{n}3P_{2n}^{n}(z_1,\cdots,z_{2n})+\sum_{s=1}^{n-1}\left(\frac{2c}{3}\right)^{s}(3-2c)P^s_{2n}(z_1,\cdots,z_{2n}),
\end{align}
where $P^s_{2n}$ share the same clustering properties with $Q^s_{2n}$, namely, 
\begin{equation}
    \begin{split}
        &P^s_{2n}(z_1=z_2=Z_1,\cdots,z_{2l-1}=z_{2l}=Z_{l},\cdots)\\
&=\prod_{1\leq j<k\leq l}(Z_j-Z_k)^{12}\prod_{j=1}^{l}\prod_{k=2l+1}^{2n}(Z_j-z_k)^6 P^{s-l}_{2n-2l}(z_{2l+1},\cdots,z_{2n}).
    \end{split}
\end{equation}
\end{corollary}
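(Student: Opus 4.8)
The plan is to read this Corollary as an identity of polynomials in the central charge $c$, reduce it by a summation-by-parts manipulation to a single $c$-independent identity, and then verify that identity by evaluating at the special point $c=\tfrac{3}{2}$. By Eq.~(\ref{def}) we have $\phi=\sum_{s=1}^{n}\left(\tfrac{2c}{3}\right)^{s}Q_{2n}^{s}$, and the definition $P_{2n}^{s}=\tfrac{1}{3}\sum_{j=1}^{s}Q_{2n}^{j}$ (for $1\le s\le n-1$, with $P_{2n}^{0}:=0$) is equivalent to $Q_{2n}^{s}=3\left(P_{2n}^{s}-P_{2n}^{s-1}\right)$ in that range. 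The whole statement then hinges on extending this relation to $s=n$, i.e.\ on the single identity
\begin{equation}
3P_{2n}^{n}=\sum_{j=1}^{n}Q_{2n}^{j}. \label{star}
\end{equation}
Granting~(\ref{star}), I would substitute $Q_{2n}^{s}=3(P_{2n}^{s}-P_{2n}^{s-1})$ for all $1\le s\le n$ and resum by parts: writing $t=\tfrac{2c}{3}$ and using $P_{2n}^{0}=0$,
\begin{equation}
\phi=3\sum_{s=1}^{n}t^{s}\left(P_{2n}^{s}-P_{2n}^{s-1}\right)=3t^{n}P_{2n}^{n}+3(1-t)\sum_{s=1}^{n-1}t^{s}P_{2n}^{s},
\end{equation}
and since $3(1-t)=3-2c$ this is precisely the claimed expansion.

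The heart of the argument is~(\ref{star}), which I would prove by evaluating at $c=\tfrac{3}{2}$. There $\tfrac{2c}{3}=1$, so Eq.~(\ref{def}) gives $\phi|_{c=3/2}=\sum_{j=1}^{n}Q_{2n}^{j}$, and hence~(\ref{star}) is equivalent to $\phi|_{c=3/2}=3P_{2n}^{n}=\mathrm{Pf}(z_{ij}^{-1})^{3}\prod_{i<j}z_{ij}^{3}$. As identified in Section~\ref{2}, the $c\to\tfrac{3}{2}$ limit is the free three-Majorana theory with $G\to\gamma\partial\phi$, in which (up to normalization and an overall sign, and using the equivalence of the free-boson current with a Majorana bilinear) the dimension-$\tfrac32$ supercurrent is the totally antisymmetric combination $G\propto\psi_{1}\psi_{2}\psi_{3}$ of three decoupled Majorana fields. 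Since the three species do not contract with one another, Wick's theorem factorizes
\begin{equation}
\langle G(z_{1})\cdots G(z_{2n})\rangle_{c=3/2}=\prod_{a=1}^{3}\langle\psi_{a}(z_{1})\cdots\psi_{a}(z_{2n})\rangle=\mathrm{Pf}(z_{ij}^{-1})^{3},
\end{equation}
using the Majorana propagator $\langle\psi_{a}(z)\psi_{a}(w)\rangle=(z-w)^{-1}$. Multiplying by $\prod_{i<j}z_{ij}^{3}$ then yields $\phi|_{c=3/2}=3P_{2n}^{n}$, establishing~(\ref{star}).

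It remains to record the clustering of $P^{s}$, which follows from Eq.~(\ref{Q}) by linearity. Sending $z_{2i-1},z_{2i}\to Z_{i}$ for $i\le l$ produces in every $Q_{2n}^{j}$ the same $j$-independent prefactor $\prod_{1\le j<k\le l}(Z_{j}-Z_{k})^{12}\prod_{j}\prod_{k}(Z_{j}-z_{k})^{6}$ times $Q_{2n-2l}^{j-l}$; summing over $j$ and shifting the index (the terms with $j-l\le 0$ vanish) converts $\sum_{j=1}^{s}Q_{2n-2l}^{j-l}$ into $\sum_{j'=1}^{s-l}Q_{2n-2l}^{j'}=3P_{2n-2l}^{s-l}$, which is exactly the asserted clustering for $1\le s\le n-1$, and the same computation applies to $s=n$ after rewriting $3P_{2n}^{n}=\sum_{j=1}^{n}Q_{2n}^{j}$ via~(\ref{star}).

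I expect the only genuinely delicate point to be the $c=\tfrac{3}{2}$ evaluation: justifying the identification $\gamma\partial\phi\leftrightarrow\psi_{1}\psi_{2}\psi_{3}$ and pinning down the normalization together with the fermionic reordering sign, so that the three-Majorana correlator comes out as exactly $+\mathrm{Pf}(z_{ij}^{-1})^{3}$ rather than its negative. An equivalent but sign-equivalent route keeps $G=\gamma\partial\phi$ and factorizes the correlator as $\mathrm{Pf}(z_{ij}^{-1})$ times the Hafnian of the boson-current contractions $z_{ij}^{-2}$; that version instead requires the classical identity $\mathrm{Hf}(z_{ij}^{-2})=\det(z_{ij}^{-1})=\mathrm{Pf}(z_{ij}^{-1})^{2}$, which is the correlator-level statement of the $U(1)$-boson/free-fermion equivalence. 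Either way, the combinatorial bookkeeping of these signs is the main obstacle, the rest being the routine Abel summation and index shifts above.
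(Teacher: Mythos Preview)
Your proposal is correct and follows essentially the same approach as the paper: the paper's proof also realizes the $c=\tfrac{3}{2}$ theory by three decoupled Majoranas with $G=i\psi_1\psi_2\psi_3$, factorizes $\langle G\cdots G\rangle$ into the cube of a single Majorana correlator $\mathrm{Pf}(z_{ij}^{-1})$, and then invokes the known clustering of $\phi$ together with the definition of $P^{s}_{2n}$. The only difference is that you make the Abel-summation step and the identity $3P_{2n}^{n}=\sum_{j=1}^{n}Q_{2n}^{j}$ explicit, whereas the paper leaves that algebra to the reader; your caution about the reordering sign and the $i^{2n}$ factor is exactly the bookkeeping the paper's choice $G=i\psi_1\psi_2\psi_3$ is designed to settle.
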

\begin{proof}
First of all, we show that $\phi(z_1,\cdots, z_{2n})=\mathrm{Pf}^{3}\left(z_{ij}^{-1}\right)\mathop{\prod}\limits_{i<j}^{2n}z^{3}_{ij}$ when $c=3/2$. Consider a CFT of three Majorana fermions $\psi_1,\psi_2,\psi_3$ with OPEs
\begin{align}
    \psi_a(z)\psi_b(w)=\frac{\delta_{ab}}{z-w}+\mathcal O(1).
\end{align}
This system has $\mathcal N=1$ superconformal symmetry with $c=3/2$, namely
\begin{align}
    T(z)=\frac{1}{2}\sum_{a=1}^3:\partial\psi_a(z)\psi_a(z):\quad G(z)=i\psi_1(z)\psi_2(z)\psi_3(z)
\end{align}
satisfy the $\mathcal N=1$ super Virasoro algebra. Since $\psi_1,\psi_2,\psi_3$ are independent, the correlator $\la G(z_1)\cdots G(z_{2n})\ra$ is simply the product of three correlators $\la \psi_1(z_1)\cdots \psi_1(z_{2n})\ra\la \psi_2(z_1)\cdots \psi_2(z_{2n})\ra\la \psi_3(z_1)\cdots \psi_3(z_{2n})\ra$, which can be easily computed:
\begin{align}
    \la \psi_a(z_1)\cdots \psi_a(z_{2n})\ra=\mathrm{Pf}\left(z_{ij}^{-1}\right),\; a=1,2,3.
\end{align}
Thus $\phi(z_1,\cdots, z_{2n})=\mathrm{Pf}^{3}\left(z_{ij}^{-1}\right)\mathop{\prod}\limits_{i<j}^{2n}z^{3}_{ij}$ when $c=3/2$. Finally, the clustering properties follow from the equation \eqref{eqn: clustering} and the definition of $P^{s}_{2n}$.
\end{proof}
The dominant clustering behaviours of our wavefunctions for any $2n$ particles, vanishing whenever three particles coincide, is fixed by $\mathrm{Pf}^{3}\left(z_{ij}^{-1}\right)\mathop{\prod}\limits_{i<j}^{2n}z^{3}_{ij}$ which is associated to the three Majorana fermion unitary CFT. This is in contrast to previous cases studied where Jack polynomials associated with non-unitary CFTs plus ``healing'' polynomials are used for generating wavefunctions of unitary CFTs.

\section{Discussion}
In this paper, we achieved two things in our studies of $\mathcal{N}=1$ SCFTs. We found explicit characterizations of our $\mathcal{N}=1$ theories in terms of a parafermion theory, an Ising theory and a free boson theory by first pairing our $\mathcal{N}=1$ theory with a parafermion theory. Supercurrent operator $G$ can also be written as linear combinations of operators from its constituent CFTs. By utilizing free field methods, we worked out explicit ground state wavefunctions of fractional quantum Hall states based on our $\mathcal{N}=1$ theories. We have also shown clustering properties of these ground state wavefunctions. Several questions remain. In our case, even though we have worked out an explicit decomposition of the supercurrent operator $G$ into parafermion operators, Majorana fermion operators and $U(1)$ vertex operators, we do not immediately see advantage of this decomposition in calculating correlation functions of $G$. Our preliminary calculations do not show much simplification from operator decomposition approach. On the other hand, we believe decomposition worked out here should be applicable to boundary conforma field theories with $\mathcal{N}=1$ superconformal algebra \cite{chen1,Makabe_2017,Richard_2002,Nepomechie_2001}. Speaking of wavefunctions, previous studies of Jack polynomials and Read-Rezayi series of fractional quantum Hall states \cite{Estienne2010Clustering, Estienne2009Relating, Estienne2012SpinSinglet, Estienne2012Conformal,Read_2009,Read:1998ed,Bernevig_2008,Bernevig_2009,Estienne_2010,Simon_2007,Simon_2009} have demonstarted that distinct clustering behaviours indicate distinct topological orders. In our case, at least for ground state, wavefunctions are the same in terms of their clustering behaviours. The only changing parameter is central charge $c$. 

We would like to clarify the role of clustering in diagnosing topological order.
Clustering conditions are local vanishing constraints on the ground-state polynomial wavefunction.
In many prominent families (e.g. certain Laughlin/Read--Rezayi constructions), distinct clustering patterns
often correlate with distinct topological orders; however, clustering by itself is generally not a complete
invariant of topological order, and different topological orders can share the same minimal local
clustering constraints while differing in other topological data.

In our construction, the leading clustering behaviour of the ground state is dictated by the universal
$\mathcal{N}=1$ super-Virasoro OPE of the supercurrent $G$ and therefore persists across the family.
What changes between different members is encoded in the central charge (and, correspondingly, in the
relative weights among the different clustering sectors, e.g. the coefficients in Eq.~(3.69)),
rather than in the local vanishing order itself.
To distinguish the associated topological orders more sharply, one should incorporate additional data,
such as the chiral central charge and, crucially, the quasiparticle sectors.
In the CFT construction, this corresponds to studying conformal blocks with insertions of other primary
fields (for example Ramond spin fields and/or parafermion primaries) and analyzing their fusion rules
and monodromies, from which topological spins and braiding properties can be extracted.
Our method based on rationality in $c$ and free-field realizations can in principle be extended to such
correlators, and we leave a detailed exploration to future work.

There are two immediate questions: 1. How to understand different topological orders sharing the same clustering behaviours in their wavefunctions? Original studies relate different pseudo-potential \cite{Haldane1983} or Kivelson-Trugman type potential Hamiltonians \cite{Trugman} with Jack polynomial ground state wavefunctions with different clustering behaviours.   2. In thermodynamic limit in which $2n\rightarrow\infty$, each ground state wavefunction should be orthogonal to each other since they represent different topological orders. How to show or give evidences to such tendency? Finally, for all other fractional spin ($\Delta=\frac{N+1}{N})$ generalization of the minimal models, can we always use free field methods to work out their correlation functions?

\emph{Acknowledgments.} We thank for the helpful discussions with Paul Fendley, Steven Simon, Taro Kimura, Davide Gaiotto, Matthias Gaberdiel, Nicolas Regnault, Benoit Estienne, Biao Lian, Prashant Kumar, Jie Wang, Ching Hung Lam and Chongying Dong. We also want to thank Jeffrey Teo for his early collaboration, many helpful discussions and comments on the draft. SN wants to specially thank Joseph Conlon for his help and encouragement during the work. This work is supported by the Alfred P. Sloan Foundation, and NSF through the Princeton University’s Materials Research Science and Engineering Center DMR-2011750. Additional support was provided by the Gordon and Betty Moore Foundation through Grant GBMF8685 towards the Princeton theory program. SN wants to acknowledge funding support from the China Scholarship
Council-FaZheng Group- University of Oxford. Kavli Institute for the Physics and Mathematics of the Universe is supported by World Premier International Research Center Initiative (WPI), MEXT, Japan. YZ would like to thank Perimeter Institute for Theoretical Physics, where part of YZ's work was done as a graduate student there.

\appendix
\section{$S_3$ minimal model}
$S_3$ minimal models have central charge $c=2\left(1-\frac{12}{(m-2)(m+2)}\right)$ with $m=5,6,\cdots$. It has extended algebra beyond Virosora algebra which is generated by
\begin{equation}
\begin{aligned}
T(z)T(w)=&\frac{1}{\left(z-w\right)^4}\left\{\frac{c}{2}+2\left(z-w\right)^2T(w)+\left(z-w\right)^3\partial T(w)+\dots \right\}\\
T(z)G^\pm(w)=&\frac{1}{\left(z-w\right)^2}\left\{\frac{4}{3}G^\pm(w)+(z-w)\partial G^\pm(w)+\dots\right\},\\
G^+(z)G^+(w)=&\frac{\lambda^+}{\left(z-w\right)^{4/3}}\left\{G^-(w)+\frac{1}{2}(z-w)\partial G^-(w)+\dots\right\},\\
G^-(z)G^-(w)=&\frac{\lambda^-}{\left(z-w\right)^{4/3}}\left\{G^+(w)+\frac{1}{2}(z-w)\partial G^+(w)+\dots\right\},\\
G^+(z)G^-(w)=&\frac{1}{\left(z-w\right)^{8/3}}\left\{\frac{3c}{8}+\left(z-w\right)^2T(w)+\dots\right\}.
\end{aligned}
\end{equation} where $G^\pm$ are operators with scaling dimension $\Delta=\frac{4}{3}$. This series of conformal field theories include examples such as the $\mathbb{Z}_6$ parafermion CFT. Each member has a known coset construction
\beq
\frac{SU(2)_4\times SU(2)_{m-4}}{SU(2)_m}
\eeq
 This series of cosets has the following relation
\begin{equation}
\frac{SU(2)_m}{U(1)_{2m}}\times \frac{SU(2)_4\times SU(2)_{m-4}}{SU(2)_m}=\frac{SU(2)_{m-4}}{U(1)_{2(m-4)}}\times \frac{SU(2)_4}{U(1)_8}\times U(1)_{8m(m-4)}
\end{equation} at the level of stress-energy tensor and $\frac{SU(2)_m}{U(1)_{2m}}$ is the coset for $\mathbb{Z}_m$ parafermion. So this means we can combine a $\mathbb{Z}_m$ parafermion theory with the $m$th member of the $S_3$ minimial model theory and turn them into a combination of a $\mathbb{Z}_{m-4}$ parafermion theory together with a $U(1)$ theory and a $\mathbb{Z}_{4}$ parafermion theory.

Levels for $U(1)$ theories need a little bit more explanations. Let us reshuffle $U(1)$ theories as following:
\beq
\text{Left}\quad (U(1)_{2m})^{-1}\times U(1)_{2(m-4)}\times U(1)_{8}, \quad\text{Right} \quad U(1)_{8m(m-4)}.
\label{U(1)S3}
\eeq To establish an equivalence relation between the left and right hand side, we first write out their Lagrangian densities:
\beq
\mathcal{L}_{l}=\frac{1}{4\pi} \sum_{I,J=1}^{3}K_{l}^{IJ} \partial_x \phi_{I} (\partial_x+\partial_t) \phi_{J}, \quad \mathcal{L}_{r}=\frac{1}{4\pi} K_{r} \partial_x \phi_1 (\partial_x+\partial_t) \phi_1
\eeq where $K_l=\begin{pmatrix}
    -2m & 0 &0\\
    0 & 2(m-4) & 0\\
    0 & 0 & 8
\end{pmatrix}$ and $K_r=8m(m-4)$. It is easy to see that there exists a matrix $M$ with integer entries
\beq
M=\begin{pmatrix}
4-m&-m&0\\
1&1&1\\
-1&-1&1\\
\end{pmatrix}, \quad \det{M}=8>0
\eeq  such that
\beq
M^T
(\begin{pmatrix}
-2m & 0 & 0\\
0 & 2(m-4) & 0\\
0 & 0 & 8\\
\end{pmatrix})
M=\begin{pmatrix}
8m(m-4)\\
\end{pmatrix}\bigoplus 16\sigma_z.
\label{K-matrix}
\eeq 
The right hand side of Eq.~(\ref{K-matrix}) is equivalent to $\mathcal{L}_r$ as we can add a backscattering term to the two counter-propagating modes
\beq
2\cos{(2(\phi_2-\phi_3))}
\eeq to gap out $16\sigma_z$ degrees of freedom ($\phi_2$ and $\phi_3$)\cite{leonid,Kane1992}. We have established an equivalence relation in Eq. (\ref{U(1)S3}).


\section{Wavefunction comparison}
In \cite{Simon_2009}, the author gives a formula of the $2n$ point correlators of $\mathcal N=1$ superconformal currents. In this section, we compare with his results and give a proof on the equivalence for $n=2$ and $n=3$.
\begin{equation}
    \phi_{2n}=\frac{(\frac{c}{3})^{\frac{n}{2}(3-n)}}{n!}\mathop{\sum}\limits_{\sigma\in S_{2n}}\mathop{\prod}\limits_{1\leq r<s\leq n}\chi(z_{\sigma(2r-1)},z_{\sigma(2r)};z_{\sigma(2s-1)},z_{\sigma(2s)}),
\end{equation}
where the function $\chi$ is:
\begin{equation}
\chi(z_{1},z_{2};z_{3},z_{4})=z_{13}^{3}z_{14}^{3}z_{23}^{3}z_{24}^{3}\left(\frac{c}{3}+\frac{z_{12}z_{34}}{z_{14}z_{23}}\right).
\end{equation}
For $n=2$, we have:
\begin{equation}
    \begin{aligned}
    \phi_{4}&=\frac{c}{6}\mathop{\sum}\limits_{P\in S_{4}}\chi(z_{P(1)},z_{P(2)};z_{P(3)},z_{P(4)})\\
    &=\frac{c}{6}\mathop{\sum}\limits_{\sigma\in S_{4}}z_{\sigma(1)\sigma(3)}^{3}z_{\sigma(1)\sigma(4)}^{3}z_{\sigma(2)\sigma(3)}^{3}z_{\sigma(2)\sigma(4)}^{3}\left(\frac{c}{3}+\frac{z_{\sigma(1)\sigma(2)}z_{\sigma(3)\sigma(4)}}{z_{\sigma(1)\sigma(4)}z_{\sigma(2)\sigma(3)}}\right).
\end{aligned}
\end{equation}
Note that:
\begin{equation}
\begin{aligned}
&z_{\sigma(1)\sigma(3)}^{3}z_{\sigma(1)\sigma(4)}^{3}z_{\sigma(2)\sigma(3)}^{3}z_{\sigma(2)\sigma(4)}^{3}\frac{z_{\sigma(1)\sigma(2)}z_{\sigma(3)\sigma(4)}}{z_{\sigma(1)\sigma(4)}z_{\sigma(2)\sigma(3)}}\\
&=\mathop{\prod}\limits_{1\leq i<j\leq 4}z_{ij}^{2}\frac{z_{\sigma(1)\sigma(3)}z_{\sigma(2)\sigma(4)}}{z_{\sigma(1)\sigma(2)}z_{\sigma(3)\sigma(4)}}
\end{aligned}
\end{equation}
The sum is over all elements in $S_{4}$. $\mathop{\prod}\limits_{1\leq i<j\leq 4}z_{ij}^{2}$ is invariant under permutation. If we permute $\sigma(2)$ and $\sigma(4)$ with each other, the expression reads:
\begin{equation}
\begin{aligned}
    &\frac{c}{6}\mathop{\prod}\limits_{1\leq i<j\leq 4}z_{ij}^{2}\frac{z_{\sigma(1)\sigma(3)}z_{\sigma(2)\sigma(4)}}{z_{\sigma(1)\sigma(4)}z_{\sigma(2)\sigma(3)}}\\
    &=\frac{2c}{3}Q_{4}^{1}
\end{aligned} 
\end{equation}
where we use Eq (\ref{Q4}).\\
One can also check that:
\begin{equation}
\begin{aligned}
     &\frac{c^{2}}{18}\mathop{\sum}\limits_{\sigma\in S_{4}}z_{\sigma(1)\sigma(3)}^{3}z_{\sigma(1)\sigma(4)}^{3}z_{\sigma(2)\sigma(3)}^{3}z_{\sigma(2)\sigma(4)}^{3}\\
     &=\frac{c^{2}}{18}\times8\times(z_{13}^{3}z_{14}^{3}z_{23}^{3}z_{24}^{3}-z_{12}^{3}z_{14}^{3}z_{23}^{3}z_{34}^{3}+z_{12}^{3}z_{13}^{3}z_{24}^{3}z_{34}^{3})\\
     &=(\frac{2c}{3})^{2}Q_{4}^{2},
\end{aligned}
\end{equation}
where we use Corollary $\ref{Q2n}$.
So $\phi_{4}(z_{1},z_{2},z_{3},z_{4})=(\frac{2c}{3})^{2}Q_{4}^{2}+\frac{2c}{3}Q_{4}^{1}$.\\

For $n=3$, we have:
\begin{equation}
\begin{aligned}
    &\phi_{6}=\frac{1}{6}\mathop{\sum}\limits_{\sigma\in S_{6}}\mathop{\prod}\limits_{1\leq r<s\leq 6}\chi(z_{\sigma(2r-1},z_{\sigma(2r)};z_{\sigma(2s-1)},z_{\sigma(2s)})\\
    &=\frac{1}{6}\mathop{\sum}\limits_{\sigma\in S_{6}}z_{\sigma(1)\sigma(3)}^{3}z_{\sigma(1)\sigma(4)}^{3}z_{\sigma(2)\sigma(3)}^{3}z_{\sigma(2)\sigma(4)}^{3}(\frac{c}{3}+\frac{z_{\sigma(1)\sigma(2)}z_{\sigma(3)\sigma(4)}}{z_{\sigma(1)\sigma(4)}z_{\sigma(2)\sigma(3)}})z_{\sigma(1)\sigma(5)}^{3}z_{\sigma(1)\sigma(6)}^{3}z_{\sigma(2)\sigma(5)}^{3}z_{\sigma(2)\sigma(6)}^{3}\\
    &(\frac{c}{3}+\frac{z_{\sigma(1)\sigma(2)}z_{\sigma(5)\sigma(6)}}{z_{\sigma(1)\sigma(6)}z_{\sigma(2)\sigma(5)}})
    z_{\sigma(3)\sigma(5)}^{3}z_{\sigma(3)\sigma(6)}^{3}z_{\sigma(4)\sigma(5)}^{3}z_{\sigma(4)\sigma(6)}^{3}(\frac{c}{3}+\frac{z_{\sigma(3)\sigma(4)}z_{\sigma(5)\sigma(6)}}{z_{\sigma(3)\sigma(6)}z_{\sigma(4)\sigma(5)}})\\
\end{aligned}
\end{equation}
For term relating $c^{3}$, it is easy to prove that it is equivalent with $Q_{6}^{3}$
.So we start from term that proportional to $c^{2}$:
\begin{equation}
    \begin{aligned}
        &\frac{1}{6}\frac{c^{2}}{9}\mathop{\sum}\limits_{\sigma\in S_{6}}\mathop{\prod}\limits_{1\leq i<j\leq 6}z_{ij}^{2}(\frac{z_{\sigma(1)\sigma(3)}z_{\sigma(2)\sigma(4)}}{z_{\sigma(1)\sigma(2)}z_{\sigma(3)\sigma(4)}}\frac{\mathop{\prod}\limits_{k\neq 5,6}z_{\sigma(k)\sigma(5)}z_{\sigma(k)\sigma(6)}}{z_{\sigma(5)\sigma(6)}^{2}}\\
        &+\frac{z_{\sigma(1)\sigma(5)}z_{\sigma(2)\sigma(6)}}{z_{\sigma(1)\sigma(2)}z_{\sigma(5)\sigma(6)}}\frac{\mathop{\prod}\limits_{k\neq 3,4}z_{\sigma(k)\sigma(3)}z_{\sigma(k)\sigma(4)}}{z_{\sigma(3)\sigma(4)}^{2}}+\frac{z_{\sigma(3)\sigma(5)}z_{\sigma(4)\sigma(6)}}{z_{\sigma(3)\sigma(4)}z_{\sigma(5)\sigma(6)}}\frac{\mathop{\prod}\limits_{k\neq 1,2}z_{\sigma(k)\sigma(1)}z_{\sigma(k)\sigma(2)}}{z_{\sigma(1)\sigma(2)}^{2}})\\
        &=\frac{4c^{2}}{3}\mathop{\prod}\limits_{1\leq a<b\leq 6}z_{ab}^{2}\left(\mathop{\sum}\limits_{i<j}^{6}\frac{1}{z_{ij}^{2}}\mathop{\prod}\limits_{k\neq i,j}^{6}z_{ik}z_{jk}\right)\\
        &=(\frac{2c}{3})^{2}Q_{6}^{2}
    \end{aligned}
\end{equation}
This is because, consider the summation over $S_{6}$ for the following term:
\begin{equation}
    \frac{c^{2}}{54}\mathop{\sum}\limits_{\sigma\in S_{6}}\mathop{\prod}\limits_{1\leq i<j\leq 6}z_{ij}^{2}\frac{z_{\sigma(1)\sigma(3)}z_{\sigma(2)\sigma(4)}}{z_{\sigma(1)\sigma(2)}z_{\sigma(3)\sigma(4)}}\frac{\mathop{\prod}\limits_{k\neq 5,6}z_{\sigma(k)\sigma(5)}z_{\sigma(k)\sigma(6)}}{z_{\sigma(5)\sigma(6)}^{2}}
\end{equation}
For a given $\sigma$, this term is invariant under the permutation between $\sigma(5)$ and $\sigma(6)$, so summation over this subgroup gives a factor of 2. Also if we restrict to a subgroup $S_{4}$ of $S_{6}$, which is the permutation between $\sigma(1),\sigma(2),\sigma(3),\sigma(4)$, then the summation over this subgroup gives a factor of 12. Since:
\begin{equation}
    6!=720=2\times 24\times 15=2\times 24\times C_{6}^{2}
\end{equation}
So this summation equals:
\begin{equation}
\begin{split}
    &2\times12\times\frac{c^{2}}{54}\mathop{\prod}\limits_{1\leq a<b\leq 6}z_{ab}^{2}\left(\mathop{\sum}\limits_{i<j}^{6}\frac{1}{z_{ij}^{2}}\mathop{\prod}\limits_{k\neq i,j}^{6}z_{ik}z_{jk}\right)\\
    &=\frac{4c^{2}}{9}\mathop{\prod}\limits_{1\leq a<b\leq 6}z_{ab}^{2}\left(\mathop{\sum}\limits_{i<j}^{6}\frac{1}{z_{ij}^{2}}\mathop{\prod}\limits_{k\neq i,j}^{6}z_{ik}z_{jk}\right)
\end{split}
\end{equation}
The summation result is the same for other two terms, so the final result is:
\begin{equation}
    \frac{4c^{2}}{3}\mathop{\prod}\limits_{1\leq a<b\leq 6}z_{ab}^{2}\left(\mathop{\sum}\limits_{i<j}^{6}\frac{1}{z_{ij}^{2}}\mathop{\prod}\limits_{k\neq i,j}^{6}z_{ik}z_{jk}\right)
\end{equation}
For term that proportional to $c$, the result is:
\begin{equation}
    \begin{aligned}
        &\frac{1}{6}\frac{c}{3}\mathop{\sum}\limits_{\sigma\in S_{6}}\mathop{\prod}\limits_{1\leq i<j\leq 6}z_{ij}^{2}(\frac{z_{\sigma(1)\sigma(3)}z_{\sigma(2)\sigma(4)}}{z_{\sigma(1)\sigma(2)}z_{\sigma(3)\sigma(4)}}\frac{z_{\sigma(1)\sigma(2)}z_{\sigma(5)\sigma(6)}}{z_{\sigma(1)\sigma(6)}z_{\sigma(2)\sigma(5)}}\frac{\mathop{\prod}\limits_{k\neq 5,6}z_{\sigma(k)\sigma(5)}z_{\sigma(k)\sigma(6)}}{z_{\sigma(5)\sigma(6)}^{2}}\\
        &+\frac{z_{\sigma(1)\sigma(5)}z_{\sigma(2)\sigma(6)}}{z_{\sigma(1)\sigma(2)}z_{\sigma(5)\sigma(6)}}\frac{z_{\sigma(3)\sigma(4)}z_{\sigma(5)\sigma(6)}}{z_{\sigma(3)\sigma(6)}z_{\sigma(4)\sigma(5)}}\frac{\mathop{\prod}\limits_{k\neq 3,4}z_{\sigma(k)\sigma(3)}z_{\sigma(k)\sigma(4)}}{z_{\sigma(3)\sigma(4)}^{2}}\\
        &+\frac{z_{\sigma(3)\sigma(5)}z_{\sigma(4)\sigma(6)}}{z_{\sigma(3)\sigma(4)}z_{\sigma(5)\sigma(6)}}\frac{z_{\sigma(1)\sigma(2)}z_{\sigma(3)\sigma(4)}}{z_{\sigma(1)\sigma(4)}z_{\sigma(2)\sigma(3)}}\frac{\mathop{\prod}\limits_{k\neq 1,2}z_{\sigma(k)\sigma(1)}z_{\sigma(k)\sigma(2)}}{z_{\sigma(1)\sigma(2)}^{2}})\\
    \end{aligned}
\end{equation}
Consider the term individually:
\begin{equation}
    \frac{1}{6}\frac{c}{3}\mathop{\sum}\limits_{\sigma\in S_{6}}\mathop{\prod}\limits_{1\leq i<j\leq 6}z_{ij}^{2}(\frac{z_{\sigma(1)\sigma(3)}z_{\sigma(2)\sigma(4)}}{z_{\sigma(1)\sigma(2)}z_{\sigma(3)\sigma(4)}}\frac{z_{\sigma(1)\sigma(2)}z_{\sigma(5)\sigma(6)}}{z_{\sigma(1)\sigma(6)}z_{\sigma(2)\sigma(5)}}\frac{\mathop{\prod}\limits_{k\neq 5,6}z_{\sigma(k)\sigma(5)}z_{\sigma(k)\sigma(6)}}{z_{\sigma(5)\sigma(6)}^{2}}
\end{equation}
For the permutation between $\sigma(5)$ and $\sigma(6)$, we have:
\begin{equation}
\begin{aligned}
     &\frac{z_{\sigma(1)\sigma(2)}z_{\sigma(5)\sigma(6)}}{z_{\sigma(1)\sigma(6)}z_{\sigma(2)\sigma(5)}}+\frac{z_{\sigma(1)\sigma(2)}z_{\sigma(6)\sigma(5)}}{z_{\sigma(1)\sigma(5)}z_{\sigma(2)\sigma(6)}}\\
     &=\frac{z_{\sigma(1)\sigma(2)}^{2}z_{\sigma(5)\sigma(6)}^{2}}{z_{\sigma(1)\sigma(5)}z_{\sigma(1)\sigma(6)}z_{\sigma(2)\sigma(5)}z_{\sigma(2)\sigma(6)}}
\end{aligned}
\end{equation}
For the permutation between $\sigma(3)$ and $\sigma(4)$, we have:
\begin{equation}
\begin{aligned}
     &\frac{z_{\sigma(1)\sigma(3)}z_{\sigma(2)\sigma(4)}}{z_{\sigma(1)\sigma(2)}z_{\sigma(3)\sigma(4)}}-\frac{z_{\sigma(1)\sigma(4)}z_{\sigma(2)\sigma(3)}}{z_{\sigma(1)\sigma(2)}z_{\sigma(3)\sigma(4)}}\\
     &=1
\end{aligned}
\end{equation}
For the permutation between $\sigma(1)$ and $\sigma(2)$, we have an extra factor of 2.
\begin{equation}
    720=2\times 2\times 2\times 90=2\times 2\times 2\times C_{6}^{2}\times C_{4}^{2}
\end{equation}
\begin{equation}
    \begin{aligned}
    &\frac{1}{6}\frac{c}{3}\mathop{\sum}\limits_{\sigma\in S_{6}}\mathop{\prod}\limits_{1\leq i<j\leq 6}z_{ij}^{2}(\frac{z_{\sigma(1)\sigma(3)}z_{\sigma(2)\sigma(4)}}{z_{\sigma(1)\sigma(2)}z_{\sigma(3)\sigma(4)}}\frac{z_{\sigma(1)\sigma(2)}z_{\sigma(5)\sigma(6)}}{z_{\sigma(1)\sigma(6)}z_{\sigma(2)\sigma(5)}}\frac{\mathop{\prod}\limits_{k\neq 5,6}z_{\sigma(k)\sigma(5)}z_{\sigma(k)\sigma(6)}}{z_{\sigma(5)\sigma(6)}^{2}}\\
        &=\frac{c}{9}\mathop{\prod}\limits_{1\leq i<j\leq 6}z_{ij}^{2}\left(\mathop{\sum}\limits_{a<b,m<n\neq a,b}^{6}z_{ab}^{2}\mathop{\prod}\limits_{k\neq a,b,m,n}^{6}z_{km}z_{kn}\right)
    \end{aligned}
\end{equation}
The result is the same for other two term, so the final result is:
\begin{equation}
\begin{aligned}
&\frac{1}{6}\frac{c}{3}\mathop{\sum}\limits_{\sigma\in S_{6}}\mathop{\prod}\limits_{1\leq i<j\leq 6}z_{ij}^{2}(\frac{z_{\sigma(1)\sigma(3)}z_{\sigma(2)\sigma(4)}}{z_{\sigma(1)\sigma(2)}z_{\sigma(3)\sigma(4)}}\frac{z_{\sigma(1)\sigma(2)}z_{\sigma(5)\sigma(6)}}{z_{\sigma(1)\sigma(6)}z_{\sigma(2)\sigma(5)}}\frac{\mathop{\prod}\limits_{k\neq 5,6}z_{\sigma(k)\sigma(5)}z_{\sigma(k)\sigma(6)}}{z_{\sigma(5)\sigma(6)}^{2}}\\
        &+\frac{z_{\sigma(1)\sigma(5)}z_{\sigma(2)\sigma(6)}}{z_{\sigma(1)\sigma(2)}z_{\sigma(5)\sigma(6)}}\frac{z_{\sigma(3)\sigma(4)}z_{\sigma(5)\sigma(6)}}{z_{\sigma(3)\sigma(6)}z_{\sigma(4)\sigma(5)}}\frac{\mathop{\prod}\limits_{k\neq 3,4}z_{\sigma(k)\sigma(3)}z_{\sigma(k)\sigma(4)}}{z_{\sigma(3)\sigma(4)}^{2}}\\
        &+\frac{z_{\sigma(3)\sigma(5)}z_{\sigma(4)\sigma(6)}}{z_{\sigma(3)\sigma(4)}z_{\sigma(5)\sigma(6)}}\frac{z_{\sigma(1)\sigma(2)}z_{\sigma(3)\sigma(4)}}{z_{\sigma(1)\sigma(4)}z_{\sigma(2)\sigma(3)}}\frac{\mathop{\prod}\limits_{k\neq 1,2}z_{\sigma(k)\sigma(1)}z_{\sigma(k)\sigma(2)}}{z_{\sigma(1)\sigma(2)}^{2}})\\
     &=\frac{c}{3}\mathop{\prod}\limits_{1\leq i<j\leq 6}z_{ij}^{2}\left(\mathop{\sum}\limits_{a<b,m<n\neq a,b}^{6}z_{ab}^{2}\mathop{\prod}\limits_{k\neq a,b,m,n}^{6}z_{km}z_{kn}\right)\\
     \end{aligned}
\end{equation}
which corresponds to the numerical results produced by Eq (\ref{Q1}).
For term that does not depend on $c$, one can check that it is actually antisymmetic and it will vanish over permutation sum.

So far we proved the equivalence for $n=2$ and $n=3$. For general $n$, we leave it as an exercise for enthusiastic readers to find an elementary proof of the equivalence between Simon's\cite{Simon_2009} and our formulae.

\bibliography{main.bib}

@article{Simon_2009,
	doi = {10.1088/1751-8113/42/5/055402},
  
	url = {https://doi.org/10.1088%2F1751-8113%2F42%2F5%2F055402},
  
	year = 2009,
	month = {jan},
  
	publisher = {{IOP} Publishing},
  
	volume = {42},
  
	number = {5},
  
	pages = {055402},
  
	author = {Steven H Simon},
  
	title = {{Correlators of $\mathcal N=1$ superconformal currents}},
  
	journal = {Journal of Physics A: Mathematical and Theoretical}
}

@article{Estienne2010Clustering,
  author  = {Estienne, B. and Regnault, N. and Santachiara, R.},
  title   = {Clustering properties, Jack polynomials and unitary conformal field theories},
  journal = {Nuclear Physics B},
  volume  = {824},
  number  = {3},
  pages   = {539--562},
  year    = {2010},
  doi     = {10.1016/j.nuclphysb.2009.09.014}
}

@article{Estienne2009Relating,
  author  = {Estienne, B. and Santachiara, R.},
  title   = {Relating Jack wavefunctions to conformal field theories},
  journal = {Journal of Physics A: Mathematical and Theoretical},
  volume  = {42},
  number  = {44},
  pages   = {445209},
  year    = {2009},
  doi     = {10.1088/1751-8113/42/44/445209}
}

@article{Estienne2012SpinSinglet,
  author  = {Estienne, B. and Bernevig, B. A.},
  title   = {Spin-singlet quantum Hall states and Jack polynomials with a prescribed symmetry},
  journal = {Nuclear Physics B},
  volume  = {857},
  number  = {2},
  pages   = {185--206},
  year    = {2012},
  doi     = {10.1016/j.nuclphysb.2012.01.004}
}

@article{Estienne2012Conformal,
  author  = {Estienne, B. and Pasquier, V. and Santachiara, R. and Serban, D.},
  title   = {Conformal blocks in Virasoro and W theories: duality and the Calogero--Sutherland model},
  journal = {Nuclear Physics B},
  volume  = {860},
  number  = {3},
  pages   = {377--420},
  year    = {2012},
  doi     = {10.1016/j.nuclphysb.2012.03.005}
}

@book{DiFrancesco:1997nk,
    author = "Di Francesco, P. and Mathieu, P. and Senechal, D.",
    title = "{Conformal Field Theory}",
    doi = "10.1007/978-1-4612-2256-9",
    isbn = "978-0-387-94785-3, 978-1-4612-7475-9",
    publisher = "Springer-Verlag",
    address = "New York",
    series = "Graduate Texts in Contemporary Physics",
    year = "1997"
}

@article{Moore:1991ks,
    author = "Moore, Gregory W. and Read, N.",
    title = "{Nonabelions in the fractional quantum Hall effect}",
    doi = "10.1016/0550-3213(91)90407-O",
    journal = "Nucl. Phys. B",
    volume = "360",
    pages = "362--396",
    year = "1991"
}

@article{Read:1998ed,
    author = "Read, N. and Rezayi, E.",
    title = "{Beyond paired quantum Hall states: Parafermions and incompressible states in the first excited Landau level}",
    eprint = "cond-mat/9809384",
    archivePrefix = "arXiv",
    reportNumber = "NSF-ITP-98-091",
    doi = "10.1103/PhysRevB.59.8084",
    journal = "Phys. Rev. B",
    volume = "59",
    pages = "8084",
    year = "1999"
}

@article{Simon_2007,
  title = {Construction of a paired wave function for spinless electrons at filling fraction $\ensuremath{\nu}=2/5$},
  author = {Simon, Steven H. and Rezayi, E. H. and Cooper, N. R. and Berdnikov, I.},
  journal = {Phys. Rev. B},
  volume = {75},
  issue = {7},
  pages = {075317},
  numpages = {12},
  year = {2007},
  month = {Feb},
  publisher = {American Physical Society},
  doi = {10.1103/PhysRevB.75.075317},
  url = {https://link.aps.org/doi/10.1103/PhysRevB.75.075317}
}

@article{Belavin:1984vu,
    author = "Belavin, A. A. and Polyakov, Alexander M. and Zamolodchikov, A. B.",
    editor = "Khalatnikov, I. M. and Mineev, V. P.",
    title = "{Infinite Conformal Symmetry in Two-Dimensional Quantum Field Theory}",
    reportNumber = "CERN-TH-3827",
    doi = "10.1016/0550-3213(84)90052-X",
    journal = "Nucl. Phys. B",
    volume = "241",
    pages = "333--380",
    year = "1984"
}

@article{Gaberdiel:1998fs,
    author = "Gaberdiel, Matthias R and Goddard, Peter",
    title = "{Axiomatic conformal field theory}",
    eprint = "hep-th/9810019",
    archivePrefix = "arXiv",
    reportNumber = "DAMTP-1998-135",
    doi = "10.1007/s002200050031",
    journal = "Commun. Math. Phys.",
    volume = "209",
    pages = "549--594",
    year = "2000"
}

@article{Chung:1991kb,
    author = "Chung, Stephen-wei and Lyman, Edwin and Tye, S. H. Henry",
    title = "{Fractional supersymmetry and minimal coset models in conformal field theory}",
    reportNumber = "CLNS-91-1057",
    doi = "10.1142/S0217751X92001484",
    journal = "Int. J. Mod. Phys. A",
    volume = "7",
    pages = "3339--3378",
    year = "1992"
}

@article{Fateev,
  year = {1987},
  author = {V.A Fateev and A.B. Zamolodchikov},
  journal = { Theot. Math.
Phys. 71 45}
}

@article{Argyres_1994,
  title = {Tree scattering amplitudes of the spin-$\frac{4}{3}$ fractional superstring. I. The untwisted sectors},
  author = {Argyres, Philip C. and Tye, S. -H. Henry},
  journal = {Phys. Rev. D},
  volume = {49},
  issue = {10},
  pages = {5326--5348},
  numpages = {0},
  year = {1994},
  month = {May},
  publisher = {American Physical Society},
  doi = {10.1103/PhysRevD.49.5326},
  url = {https://link.aps.org/doi/10.1103/PhysRevD.49.5326}
}

@misc{Cardy,
  doi = {10.48550/ARXIV.0807.3472},
  url = {https://arxiv.org/abs/0807.3472},
  author = {Cardy,  John},
  keywords = {Statistical Mechanics (cond-mat.stat-mech),  High Energy Physics - Theory (hep-th),  FOS: Physical sciences,  FOS: Physical sciences},
  title = {Conformal Field Theory and Statistical Mechanics},
  publisher = {arXiv},
  year = {2008},
  copyright = {arXiv.org perpetual,  non-exclusive license}
}

@article{Friedan1986,
  doi = {10.1016/0550-3213(86)90356-1},
  url = {https://doi.org/10.1016/0550-3213(86)90356-1},
  year = {1986},
  month = jun,
  publisher = {Elsevier {BV}},
  volume = {271},
  number = {1},
  pages = {93--165},
  author = {Daniel Friedan and Emil Martinec and Stephen Shenker},
  title = {Conformal invariance,  supersymmetry and string theory},
  journal = {Nuclear Physics B}
}

@article{Argyres_1993,
	doi = {10.1016/0550-3213(93)90154-h},
  
	url = {https://doi.org/10.1016%2F0550-3213%2893%2990154-h},
  
	year = 1993,
	month = {feb},
  
	publisher = {Elsevier {BV}
},
  
	volume = {391},
  
	number = {1-2},
  
	pages = {409--459},
  
	author = {Philip C Argyres and James M Grochocinski and S.-H.Henry Tye},
  
	title = {Construction of the K = 8 fractional superconformal algebras},
  
	journal = {Nuclear Physics B}
}

@article{Read_2009,
  title = {Non-Abelian adiabatic statistics and Hall viscosity in quantum Hall states and ${p}_{x}+i{p}_{y}$ paired superfluids},
  author = {Read, N.},
  journal = {Phys. Rev. B},
  volume = {79},
  issue = {4},
  pages = {045308},
  numpages = {41},
  year = {2009},
  month = {Jan},
  publisher = {American Physical Society},
  doi = {10.1103/PhysRevB.79.045308},
  url = {https://link.aps.org/doi/10.1103/PhysRevB.79.045308}
}

@article{Santos,
  title = {Supersymmetry in the fractional quantum Hall regime},
  author = {Sagi, Eran and Santos, Raul A.},
  journal = {Phys. Rev. B},
  volume = {95},
  issue = {20},
  pages = {205144},
  numpages = {9},
  year = {2017},
  month = {May},
  publisher = {American Physical Society},
  doi = {10.1103/PhysRevB.95.205144},
  url = {https://link.aps.org/doi/10.1103/PhysRevB.95.205144}
}

@article{Estienne_2010,
	doi = {10.1103/physrevb.82.205307},
  
	url = {https://doi.org/10.1103%2Fphysrevb.82.205307},
  
	year = 2010,
	month = {nov},
  
	publisher = {American Physical Society ({APS})},
  
	volume = {82},
  
	number = {20},
  
	author = {Benoit Estienne and B. Andrei Bernevig and Raoul Santachiara},
  
	title = {Electron-quasihole duality and second-order differential equation for Read-Rezayi and Jack wave functions},
  
	journal = {Physical Review B}
}

@article{Haldane1983,
  title = {Fractional Quantization of the Hall Effect: A Hierarchy of Incompressible Quantum Fluid States},
  author = {Haldane, F. D. M.},
  journal = {Phys. Rev. Lett.},
  volume = {51},
  issue = {7},
  pages = {605--608},
  numpages = {0},
  year = {1983},
  month = {Aug},
  publisher = {American Physical Society},
  doi = {10.1103/PhysRevLett.51.605},
  url = {https://link.aps.org/doi/10.1103/PhysRevLett.51.605}
}

@article{Trugman,
  title = {Exact results for the fractional quantum Hall effect with general interactions},
  author = {Trugman, S. A. and Kivelson, S.},
  journal = {Phys. Rev. B},
  volume = {31},
  issue = {8},
  pages = {5280--5284},
  numpages = {0},
  year = {1985},
  month = {Apr},
  publisher = {American Physical Society},
  doi = {10.1103/PhysRevB.31.5280},
  url = {https://link.aps.org/doi/10.1103/PhysRevB.31.5280}
}

@article{Nepomechie_2001,
	doi = {10.1088/0305-4470/34/33/314},
	url = {https://doi.org/10.1088%2F0305-4470%2F34%2F33%2F314},
	year = 2001,
	month = {aug},
	publisher = {{IOP} Publishing},
	volume = {34},
	number = {33},
	pages = {6509--6524},
	author = {Rafael I Nepomechie},
	title = {Consistent superconformal boundary states},
	journal = {Journal of Physics A: Mathematical and General}
}

@article{Richard_2002,
	doi = {10.1016/s0550-3213(02)00234-1},
	url = {https://doi.org/10.1016%2Fs0550-3213%2802%2900234-1},
	year = 2002,
	month = {jun},
	publisher = {Elsevier {BV}},
	volume = {631},
	number = {3},
	pages = {447--470},
	author = {Christoph Richard and Paul~A. Pearce},
	title = {Integrable lattice realizations of N=1 superconformal boundary conditions},
	journal = {Nuclear Physics B}
}

@article{Makabe_2017,
	doi = {10.1007/jhep09(2017)013},
	url = {https://doi.org/10.1007%2Fjhep09%282017%29013},
	year = 2017,
	month = {sep},
	publisher = {Springer Science and Business Media {LLC}},
	volume = {2017},
	number = {9},
	author = {Isao Makabe and G{\'{e}}rard M.T. Watts},
	title = {Defects in the tri-critical Ising model},
	journal = {Journal of High Energy Physics}
}

@article{chen1,
  title = {Revisiting the Ramond sector of the $\mathcal{N}=1$ superconformal minimal models},
  author = {Chen, Chun and Maciejko, Joseph},
  journal = {Phys. Rev. D},
  volume = {102},
  issue = {12},
  pages = {121701},
  numpages = {7},
  year = {2020},
  month = {Dec},
  publisher = {American Physical Society},
  doi = {10.1103/PhysRevD.102.121701},
  url = {https://link.aps.org/doi/10.1103/PhysRevD.102.121701}
}

@article{Kane1992,
  title = {Transport in a one-channel Luttinger liquid},
  author = {Kane, C. L. and Fisher, Matthew P. A.},
  journal = {Phys. Rev. Lett.},
  volume = {68},
  issue = {8},
  pages = {1220--1223},
  numpages = {0},
  year = {1992},
  month = {Feb},
  publisher = {American Physical Society},
  doi = {10.1103/PhysRevLett.68.1220},
  url = {https://link.aps.org/doi/10.1103/PhysRevLett.68.1220}
}

@misc{leonid,
  doi = {10.48550/ARXIV.COND-MAT/9610037},
  url = {https://arxiv.org/abs/cond-mat/9610037},
  author = {Fisher,  Matthew P. A. and Glazman,  Leonid I.},
  keywords = {Mesoscale and Nanoscale Physics (cond-mat.mes-hall),  FOS: Physical sciences,  FOS: Physical sciences},
  title = {Transport in a one-dimensional Luttinger liquid},
  publisher = {arXiv},
  year = {1996},
  copyright = {Assumed arXiv.org perpetual,  non-exclusive license to distribute this article for submissions made before January 2004}
}

@article{Ginsparg,
  doi = {10.48550/ARXIV.HEP-TH/9108028},
  url = {https://arxiv.org/abs/hep-th/9108028},
  author = {Ginsparg,  Paul},
  keywords = {High Energy Physics - Theory (hep-th),  FOS: Physical sciences,  FOS: Physical sciences},
  title = {Applied Conformal Field Theory},
  publisher = {arXiv},
  year = {1991},
  copyright = {Assumed arXiv.org perpetual,  non-exclusive license to distribute this article for submissions made before January 2004}
}

@article{Dixon1988,
  doi = {10.1016/0550-3213(88)90011-9},
  url = {https://doi.org/10.1016/0550-3213(88)90011-9},
  year = {1988},
  month = aug,
  publisher = {Elsevier {BV}},
  volume = {306},
  number = {3},
  pages = {470--496},
  author = {L. Dixon and P. Ginsparg and J. Harvey},
  title = {superconformal field theory},
  journal = {Nuclear Physics B}
}

@book{Halperin2020,
  doi = {10.1142/11751},
  url = {https://doi.org/10.1142/11751},
  year = {2020},
  month = jan,
  publisher = {{WORLD} {SCIENTIFIC}},
  author = {Bertrand I Halperin and Jainendra K Jain},
  title = {Fractional Quantum Hall Effects}
}

@article{Cenke1,
  title = {Wave Function and Strange Correlator of Short-Range Entangled States},
  author = {You, Yi-Zhuang and Bi, Zhen and Rasmussen, Alex and Slagle, Kevin and Xu, Cenke},
  journal = {Phys. Rev. Lett.},
  volume = {112},
  issue = {24},
  pages = {247202},
  numpages = {5},
  year = {2014},
  month = {Jun},
  publisher = {American Physical Society},
  doi = {10.1103/PhysRevLett.112.247202},
  url = {https://link.aps.org/doi/10.1103/PhysRevLett.112.247202}
}

@article{Qiu1986,
  doi = {10.1016/0550-3213(86)90553-5},
  url = {https://doi.org/10.1016/0550-3213(86)90553-5},
  year = {1986},
  month = jan,
  publisher = {Elsevier {BV}},
  volume = {270},
  pages = {205--234},
  author = {Zongan Qiu},
  title = {Supersymmetry,  two-dimensional critical phenomena and the tricritical Ising model},
  journal = {Nuclear Physics B}
}

@article{Friedan1985,
  doi = {10.1016/0370-2693(85)90819-6},
  url = {https://doi.org/10.1016/0370-2693(85)90819-6},
  year = {1985},
  month = jan,
  publisher = {Elsevier {BV}},
  volume = {151},
  number = {1},
  pages = {37--43},
  author = {Daniel Friedan and Zongan Qiu and Stephen Shenker},
  title = {Superconformal invariance in two dimensions and the tricritical Ising model},
  journal = {Physics Letters B}
}

@article{Goddard1986,
  doi = {10.1007/bf01464283},
  url = {https://doi.org/10.1007/bf01464283},
  year = {1986},
  month = mar,
  publisher = {Springer Science and Business Media {LLC}},
  volume = {103},
  number = {1},
  pages = {105--119},
  author = {P. Goddard and A. Kent and D. Olive},
  title = {Unitary representations of the Virasoro and super-Virasoro algebras},
  journal = {Communications In Mathematical Physics}
}

@book{Lukyanov:1990tf,
    author = "Lukyanov, Sergei L. and Fateev, V. A.",
    title = "{Additional symmetries and exactly soluble models in two-dimensional conformal field theory}",
    year = "1990"
}

@article{Teo1,
  title = {From orbifolding conformal field theories to gauging topological phases},
  author = {Chen, Xiao and Roy, Abhishek and Teo, Jeffrey C. Y. and Ryu, Shinsei},
  journal = {Phys. Rev. B},
  volume = {96},
  issue = {11},
  pages = {115447},
  numpages = {20},
  year = {2017},
  month = {Sep},
  publisher = {American Physical Society},
  doi = {10.1103/PhysRevB.96.115447},
  url = {https://link.aps.org/doi/10.1103/PhysRevB.96.115447}
}

@article{Cho2017,
  doi = {10.1088/1751-8121/aa7782},
  url = {https://doi.org/10.1088/1751-8121/aa7782},
  year = {2017},
  month = jul,
  publisher = {{IOP} Publishing},
  volume = {50},
  number = {30},
  pages = {304002},
  author = {Gil Young Cho and Ken Shiozaki and Shinsei Ryu and Andreas W W Ludwig},
  title = {Relationship between symmetry protected topological phases and boundary conformal field theories via the entanglement spectrum},
  journal = {Journal of Physics A: Mathematical and Theoretical}
}

@article{Shinsei1,
  title = {Boundary conformal field theory and symmetry-protected topological phases in $2+1$ dimensions},
  author = {Han, Bo and Tiwari, Apoorv and Hsieh, Chang-Tse and Ryu, Shinsei},
  journal = {Phys. Rev. B},
  volume = {96},
  issue = {12},
  pages = {125105},
  numpages = {15},
  year = {2017},
  month = {Sep},
  publisher = {American Physical Society},
  doi = {10.1103/PhysRevB.96.125105},
  url = {https://link.aps.org/doi/10.1103/PhysRevB.96.125105}
}

@article{Harvey2018,
  doi = {10.1007/jhep09(2018)032},
  url = {https://doi.org/10.1007/jhep09(2018)032},
  year = {2018},
  month = sep,
  publisher = {Springer Science and Business Media {LLC}},
  volume = {2018},
  number = {9},
  author = {Jeffrey A. Harvey and Yuxiao Wu},
  title = {Hecke relations in rational conformal field theory},
  journal = {Journal of High Energy Physics}
}

@article{Duan2022,
  doi = {10.1007/jhep09(2022)202},
  url = {https://doi.org/10.1007/jhep09(2022)202},
  year = {2022},
  month = sep,
  publisher = {Springer Science and Business Media {LLC}},
  volume = {2022},
  number = {9},
  author = {Zhihao Duan and Kimyeong Lee and Kaiwen Sun},
  title = {Hecke relations,  cosets and the classification of 2d {RCFTs}},
  journal = {Journal of High Energy Physics}
}

@article{Kane1,
  title = {Fibonacci Topological Superconductor},
  author = {Hu, Yichen and Kane, C. L.},
  journal = {Phys. Rev. Lett.},
  volume = {120},
  issue = {6},
  pages = {066801},
  numpages = {5},
  year = {2018},
  month = {Feb},
  publisher = {American Physical Society},
  doi = {10.1103/PhysRevLett.120.066801},
  url = {https://link.aps.org/doi/10.1103/PhysRevLett.120.066801}
}

@article{Bae2021,
  doi = {10.1007/s00220-021-04207-7},
  url = {https://doi.org/10.1007/s00220-021-04207-7},
  year = {2021},
  month = oct,
  publisher = {Springer Science and Business Media {LLC}},
  volume = {388},
  number = {1},
  pages = {1--105},
  author = {Jin-Beom Bae and Jeffrey A. Harvey and Kimyeong Lee and Sungjay Lee and Brandon C. Rayhaun},
  title = {Conformal Field Theories with Sporadic Group Symmetry},
  journal = {Communications in Mathematical Physics}
}

@article{Harvey2020,
  doi = {10.1088/1751-8121/ab8e03},
  url = {https://doi.org/10.1088/1751-8121/ab8e03},
  year = {2020},
  month = aug,
  publisher = {{IOP} Publishing},
  volume = {53},
  number = {33},
  pages = {334003},
  author = {Jeffrey A Harvey and Yichen Hu and Yuxiao Wu},
  title = {Galois symmetry induced by Hecke relations in rational conformal field theory and associated modular tensor categories},
  journal = {Journal of Physics A: Mathematical and Theoretical}
}

@article{Bernevig_2009,
doi = {10.1088/1751-8113/42/24/245206},
url = {https://dx.doi.org/10.1088/1751-8113/42/24/245206},
year = {2009},
month = {may},
publisher = {},
volume = {42},
number = {24},
pages = {245206},
author = {B Andrei Bernevig and Victor Gurarie and Steven H Simon},
title = {Central charge and quasihole scaling dimensions from model wavefunctions: toward relating Jack wavefunctions to -algebras},
journal = {Journal of Physics A: Mathematical and Theoretical},
abstract = {We present a general method for obtaining the central charge and the quasihole scaling dimension directly from ground-state and quasihole wavefunctions. Our method applies to wavefunctions satisfying specific clustering properties. We then use our method to examine the relation between Jack symmetric functions and certain -algebras. We add substantially to the evidence that the (k, r) admissible Jack functions correspond to correlators of the conformal field theory  by calculating the central charge and scaling dimensions of some of the fields in both cases and showing that they match. For the Jacks described by unitary -models, the central charge and quasihole exponents match those previously obtained from analyzing the physics of the edge excitations. For the Jacks described by non-unitary -models the central charge and quasihole scaling dimensions obtained from the wavefunctions differ from those obtained from the edge physics, which instead agree with the ‘effective’ central charge of the corresponding -model.}
}

@article{Bernevig_2008,
	doi = {10.1103/physrevlett.100.246802},
  
	url = {https://doi.org/10.1103%2Fphysrevlett.100.246802},
  
	year = 2008,
	month = {jun},
  
	publisher = {American Physical Society ({APS})},
  
	volume = {100},
  
	number = {24},
  
	author = {B. Andrei Bernevig and F. D. M. Haldane},
  
	title = {Model Fractional Quantum Hall States and Jack Polynomials},
  
	journal = {Physical Review Letters}
}

@article{Freedman2002,
  doi = {10.1007/s002200200645},
  url = {https://doi.org/10.1007/s002200200645},
  year = {2002},
  month = jun,
  publisher = {Springer Science and Business Media {LLC}},
  volume = {227},
  number = {3},
  pages = {605--622},
  author = {Michael H. Freedman and Michael Larsen and Zhenghan Wang},
  title = {A Modular Functor Which is Universal{\textparagraph}for Quantum Computation},
  journal = {Communications in Mathematical Physics}
}

@article{Simon_2010,
  title = {Quantum Hall wave functions based on ${S}_{3}$ conformal field theories},
  author = {Simon, Steven H. and Rezayi, Edward H. and Regnault, Nicolas},
  journal = {Phys. Rev. B},
  volume = {81},
  issue = {12},
  pages = {121301},
  numpages = {4},
  year = {2010},
  month = {Mar},
  publisher = {American Physical Society},
  doi = {10.1103/PhysRevB.81.121301},
  url = {https://link.aps.org/doi/10.1103/PhysRevB.81.121301}
}

@article{10.1215/S0012-7094-92-06604-X,
author = {Igor B. Frenkel and Yongchang Zhu},
title = {{Vertex operator algebras associated to representations of affine and Virasoro algebras}},
volume = {66},
journal = {Duke Mathematical Journal},
number = {1},
publisher = {Duke University Press},
pages = {123 -- 168},
year = {1992},
doi = {10.1215/S0012-7094-92-06604-X},
URL = {https://doi.org/10.1215/S0012-7094-92-06604-X}
}

@article{Gaberdiel:1999mc,
    author = "Gaberdiel, Matthias R.",
    title = "{An Introduction to conformal field theory}",
    eprint = "hep-th/9910156",
    archivePrefix = "arXiv",
    reportNumber = "DAMTP-1999-143",
    doi = "10.1088/0034-4885/63/4/203",
    journal = "Rept. Prog. Phys.",
    volume = "63",
    pages = "607--667",
    year = "2000"
}

@book{Polchinski:1998rr,
    author = "Polchinski, J.",
    title = "{String theory. Vol. 2: Superstring theory and beyond}",
    doi = "10.1017/CBO9780511618123",
    isbn = "978-0-511-25228-0, 978-0-521-63304-8, 978-0-521-67228-3",
    publisher = "Cambridge University Press",
    series = "Cambridge Monographs on Mathematical Physics",
    month = "12",
    year = "2007"
}
\bibliographystyle{JHEP}

\end{document}